\newtheorem{theorem}{Theorem}[section]
\newtheorem{corollary}[theorem]{Corollary}
\newtheorem{proposition}[theorem]{Proposition}
\newtheorem{lemma}[theorem]{Lemma}
\newtheorem{remark}[theorem]{Remark}
\newtheorem{definition}[theorem]{Definition}
\newtheorem{claim}[theorem]{Claim}
\newtheorem{problem}[theorem]{Problem}
\newtheorem{notation}[theorem]{Notation}
\newtheorem{beispiel}[theorem]{Example}
\newtheorem{conjecture}[theorem]{Conjecture}
\newenvironment{theo}[1][\empty]{\begin{theorem} 
\ifthenelse{\equal{#1}{\empty}} {}  {\itshape(#1)} \normalfont ~\\}{\end{theorem}}
\newenvironment{cor}[1][\empty]{\begin{corollary} 
\ifthenelse{\equal{#1}{\empty}} {}  {\itshape(#1)} \normalfont ~\\}{\end{corollary}}
\newenvironment{prop}[1][\empty]{\begin{proposition} 
\ifthenelse{\equal{#1}{\empty}} {}  {\itshape(#1)} \normalfont ~\\}{\end{proposition}}
\newenvironment{lem}[1][\empty]{\begin{lemma} 
\ifthenelse{\equal{#1}{\empty}} {}  {\itshape(#1)} \normalfont ~\\}{\end{lemma}}
\newenvironment{rem}[1][\empty]{\begin{remark} 
\ifthenelse{\equal{#1}{\empty}} {}  {\itshape(#1)} \normalfont ~\\}{\end{remark}}
\newenvironment{defi}[1][\empty]{\begin{definition} 
\ifthenelse{\equal{#1}{\empty}} {}  {\itshape(#1)} \normalfont ~\\}{\end{definition}}
\newenvironment{prb}[1][\empty]{\begin{problem} 
\ifthenelse{\equal{#1}{\empty}} {}  {\itshape(#1)} \normalfont ~\\}{\end{problem}}
\newenvironment{proof}[1][\empty]{\ifthenelse{\equal{#1}{\empty}} {\paragraph{\textbf{Proof.}}~\\}  
{\paragraph{\textbf{Proof} of #1.}~\\}} {\hfill $\Box$}
\newcommand {\N}[0] {\mathbb{N}}
\newcommand {\R}[0] {\mathbb{R}}
\newcommand {\B}[0] {\mathbb{B}}
\newcommand{\conv}[0] {\mathrm{conv}}
\newcommand{\aff}[0] {\mathrm{aff}}
\newcommand{\vol}[0]{\mathrm{vol}}
\newcommand{\bd}[0]{\mathrm{bd}}
\renewcommand{\int}[0]{\mathrm{int}}
\newcommand{\CC}[0]{\mathcal{C}}
\newcommand{\CL}[0]{\mathcal{L}}
\begin{document}
\begin{titlepage}
  \renewcommand{\thefootnote} {\fnsymbol{footnote}} \vspace*{1,0cm}
  \begin{center}
 \LARGE  {\bf No Dimension Independent Core-Sets for Containment under Homothetics} 
 \\[25mm]
 \normalsize
 \begin{minipage}[c]{10pt}
   \mbox{}
 \end{minipage}
 \begin{minipage}[c]{170pt}
   Ren\'{e} Brandenberg \\
   ~\\
   Zentrum Mathematik\\
   Technische Universität München\\
   Boltzmannstr.~3\\
   85747 Garching bei München \\
   Germany \\ 
   ~\\
   E-mail: brandenb@ma.tum.de
 \end{minipage} 
 \hfill
 \begin{minipage}[c]{10pt}
   \mbox{}
 \end{minipage}
\begin{minipage}[c]{170pt}
  Stefan K\"{o}nig\footnotemark[1] \\
  ~\\
  Zentrum Mathematik\\
  Technische Universität München\\
  Boltzmannstr.~3\\
  85747 Garching bei München\\
  Germany \\
  ~\\
  E-mail: koenig@ma.tum.de
 \end{minipage}
 
 \vspace*{2.5cm}

\begin{minipage}[c]{410pt}
\noindent {\sc Abstract.}    
This paper deals with the containment problem under homothetics which has the minimal enclosing ball (MEB) problem as a prominent representative. 
We connect the problem to results in classic convex geometry and introduce a new
series of radii, which we call core-radii. For the MEB problem, these radii have
already been considered from a different point of view and sharp inequalities between
them are known. In this paper sharp inequalities between core-radii for general
containment under homothetics are obtained.

Moreover, the presented inequalities are used to derive sharp upper bounds on
the size of core-sets for containment under homothetics. In the MEB case, this
yields  a tight (dimension independent) bound for the size of such core-sets. In
the general case, we show that there are core-sets of size linear in the
dimension and that this bound stays sharp even if the container is required to be symmetric. 
\end{minipage}

\vspace*{1cm}

\begin{minipage}[c]{410pt}
\noindent {\sc Key words.} Core-Sets, Convex Geometry, Geometric Inequalities, Computational Geometry, Optimal Containment, Approximation Algorithms, Dimension Reduction, k-Center
\end{minipage}
\end{center}

\footnotetext[1]{The second author gratefully acknowledges the support of the TUM Graduate School's Faculty Graduate Center \emph{International School of Applied Mathematics} at Technische Universität München, Germany.}

\renewcommand{\thefootnote}%
{\arabic{}}

\end{titlepage}

\selectlanguage{english}

\renewcommand{\subseteq}{\subset}
\renewcommand{\supseteq}{\supset}
\newcommand{\MCP}{MCP$_{Hom}$}

\section{Introduction}

 Many well known problems in computational geometry can be classified as some type of optimal containment problem, 
 where the objective is to find an extremal representative $C^*$ of a given class of convex bodies, 
 such that $C^*$ contains a given point set $P$ (or vice versa). 
 These problems arise in many different applications, e.g.\ facility location, shape fitting and packing problems, 
 clustering, pattern recognition or statistical data reduction. Typical representatives are the minimal enclosing ball (MEB) problem, smallest enclosing cylinders, slabs, boxes, or ellipsoids; see \cite{gritzmannKlee-94} for a survey. Also the well known $k$-center problem, where $P$ is to be covered by $k$ homothetic copies of a given container $C$, has to be mentioned in this context.

 Because of its simple description and the multitude of both theoretical and practical applications 
 there is vast literature concerning the MEB problem. In recent years, a main focus has been on so called core-sets, 
 i.e. small subsets $S$ of $P$ requiring balls of (almost) the same radius  
to be enclosed as $P$ itself. For the Euclidean MEB problem algorithms constructing core-sets of sizes only depending on the approximation quality 
 but neither on the number of points to be enclosed nor the dimension have been developed in 
 \cite{badoiu, bhpi-02, clarkson, yapv-07}. This yields not only another fully polynomial time approximation scheme (FPTAS) 
 for MEB, but also a polynomial time approximation scheme (PTAS) for the harder Euclidean $k$-center problem which also works very well in practice \cite{br-07}.

 However, all variants of core-set algorithms for MEB are based on the so called half-space 
 lemma \cite{badoiu,halbraumlemma} or equivalent optimality conditions, a property characterizing the 
 Euclidean ball \cite{gritzmannKlee-92}, thus not allowing immediate
 generalization to the 
 superordinate \emph{Containment under Homothetics} that we consider in this paper:

\begin{prb} \label{prb:contHomo}%
For $P \subseteq\R^d$ compact and  $C \subseteq \R^d$ a full-dimensional compact convex set (called container) the \emph{minimal containment problem under homothetics} (\MCP) is to find the least dilatation factor $\rho \geq 0$, such that a translate of $\rho C$ contains $P$. In other words:
 we are looking for a solution to the following optimization problem:
\begin{equation} \label{eq:problem}
\begin{array}{rl}
 \min &  \rho \\
 s.t.	&  P \subseteq c + \rho C \\
 			&	 c \in \R^d \\
 			&	 \rho \geq 0
\end{array}
\end{equation}
where  $c +\rho C := \{c + \rho x : x \in C\}$. 
The assumption that $C$ be full-dimensional ensures that Problem \ref{prb:contHomo} has a feasible solution for every $P$. Hence, a compactness argument \cite{gritzmannKlee-92} shows that the minimum in \eqref{eq:problem} is attained for every $P$ and $C$.  We write $R(P,C)$ for the optimal value of \eqref{eq:problem} and call it the \emph{$C$-radius} of $P$.
Hence, if $C$ is a Euclidean ball and $P$ is finite this specializes to the MEB problem. If $C$ is 0-symmetric this is the problem of computing the outer radius of $P$ with respect to the norm $\|\cdot\|_C$ induced by the gauge body $C$ as already considered e.g. in \cite{bonnesenFenchelT}. 
\end{prb}

\begin{figure}[h]
\centering
\includegraphics[width=0.45\textwidth]{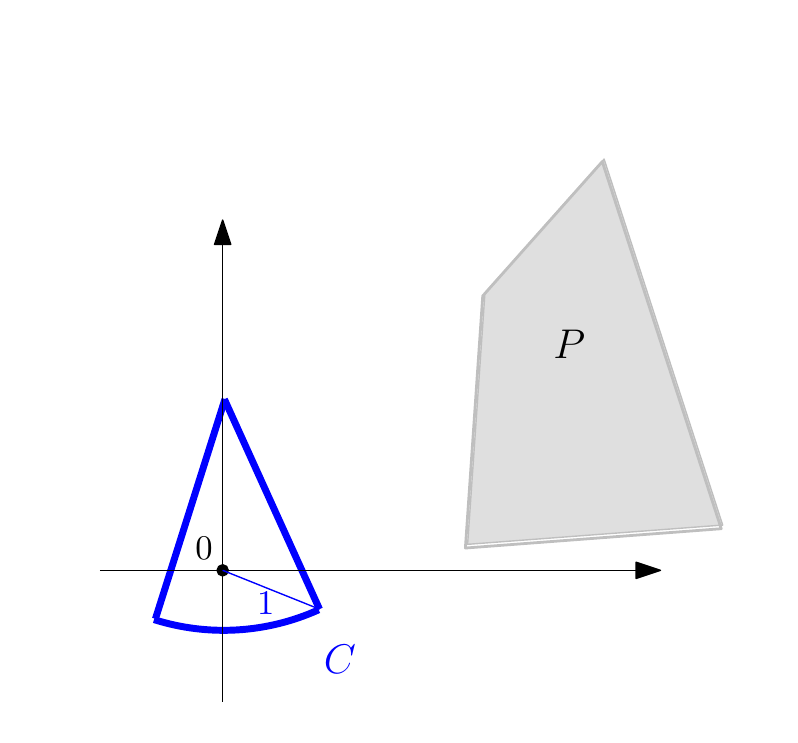}
\hfill
\includegraphics[width=0.45\textwidth]{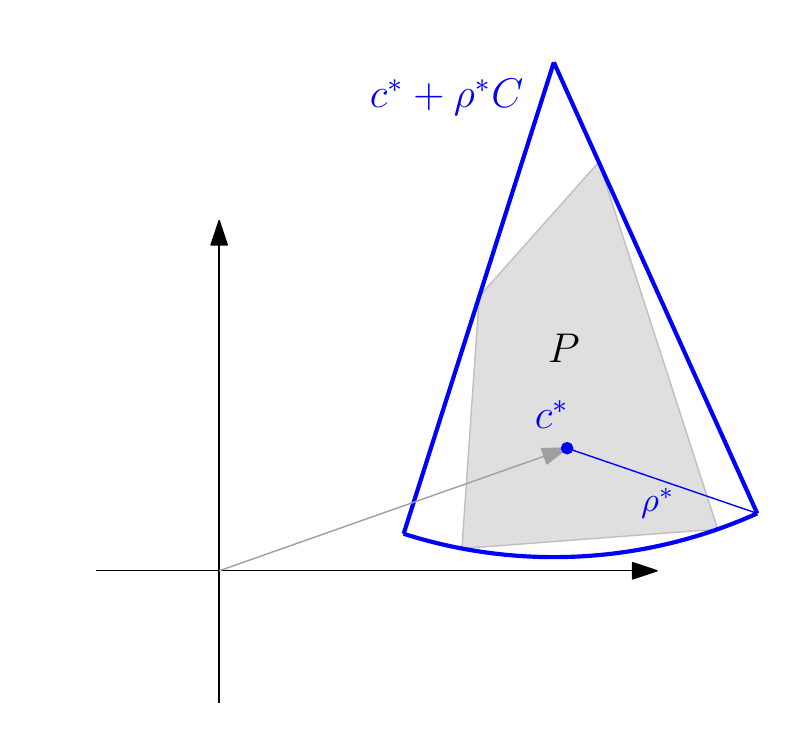}
\caption{Left: Possible input for problem \ref{prb:contHomo}. Right: An optimal solution for the input on the left. \label{fig:contHomo}}
\end{figure}

Besides direct applications Problem \ref{prb:contHomo} is often the basis for solving much harder containment problems (e.g. containment under similarities), which already gives a reason for an intensive search for good (approximation) algorithms. Compared to the approach in \cite{svz}, approximation via core-sets has the additional advantage that it may be turned into a PTAS for the $k$-center problem as demonstrated in \cite{bhpi-02}. Whereas there is a rich literature on the Euclidean MEB problem (and its core-sets) that exhibit many nice properties, only little is known about the general case and how much of the Euclidean properties carry over to Problem \ref{prb:contHomo}. (For an overview of possible solution strategies depending on given container classes, we refer to \cite{brandenbergRoth-10}.)

For $P$ and $C$ as in Problem \ref{prb:contHomo}, we call a subset $S \subset P$ an $\varepsilon$-core-set for some $\epsilon \geq 0$ if 
 $$R(P,C) \leq (1+\varepsilon)R(S,C).$$

Answering the questions for the size of core-sets for Problem \ref{prb:contHomo}, this paper proves the following result:

\begin{theo}[No sublinear core-sets for containment under homothetics] \label{theo:main}%
  For every  compact set $P \subset \R^d$, every container  $C\subseteq \R^d$, and $\varepsilon \ge 0$ there exists an 
  $\varepsilon$-core-set of $P$ of size at most 
  $\left \lceil \frac{d}{1+\varepsilon}\right\rceil+1$. Moreover, for any  $\varepsilon <1$ there exists a body $P  
  \subseteq\R^d$ and a 0-symmetric container $C$ such that no smaller subset of $P$ suffices.
\end{theo}

In order to prove the positive part of Theorem \ref{theo:main}, we will state several new geometric identities and inequalities between radii of convex sets, which connect Problem \ref{prb:contHomo} to results in classic convex geometry. The negative part of the theorem (i.e. that the bound cannot be improved even for 0-symmetric containers) then follows by proving that these inequalities (and so the resulting bounds on core-set sizes) are best possible.

Moreover, the connection between core-sets and a series of radii from convex geometry will enable us to give a sharp upper bound for the size of core-sets for the MEB problem:

\begin{theo}[Size of $\varepsilon$-core-sets for MEB] \label{theo:coreSetsMEB}%
 Let $P \subseteq \R^d$ be compact and $\varepsilon > 0$. If $C=\B^d$, then there exists an $\varepsilon$-core-set of $P$ of size at most 
 $$\left \lceil \frac{1}{2\varepsilon+\varepsilon^2}\right\rceil+1,$$  
 and this is the best possible $d$-independent bound.
\end{theo}

In the following section, we will explain our notation, state the basic definitions and collect the tools that we need in order to prove Theorems \ref{theo:main} and \ref{theo:coreSetsMEB}. Section 3 then  proves the mentioned radius identities that will lead to Theorem \ref{theo:coreSetsMEB}. Finally, section 4 is dedicated to the derivation of Theorem \ref{theo:main}.

\section{Geometric Foundations} \label{sec:coreRadiiBasics}%
\subsection{Notation}
Throughout this paper, we are working in $d$-dimensional real space and 
for $A \subseteq
\R^d$ we write $\aff(A), \conv(A), \int(A)$, and $\bd(A)$ for the affine hull, the convex hull, the interior, and the boundary of $A$, respectively. For a set $A \subseteq \R^d$, its dimension is $\dim(A):= \dim (\aff(A))$. Furthermore, for any two sets $A, B \subset \R^d$ and $\rho \in \R$, let $\rho A := \{\rho a: a \in A\}$ and $A+B:= \{a+b: a\in A, b\in B\}$ the $\rho$-\emph{dilatation} of $A$ and the \emph{Minkowski sum} of $A$ and $B$, respectively. 
For short, we abbreviate  $A + (-B)$ by $A -B$ and  $A+\{c\}$ by $A+c$. At some points, we will make use of the identity $A+A= 2A$ for $A\subseteq \R^d$ convex.

Furthermore, $\mathcal{L}^d_k$ and $\mathcal{A}^d_k$ denote the family of all $k$-dimensional linear and affine subspaces of $\R^d$, respectively, and $A|F$ is used for the orthogonal projection of $A$ onto $F$ for $F \in \mathcal{A}^d_k$. 

We call $C \subset \R^d$ a \emph{body}, if $C$ is convex and compact, and \emph{container} if it is a body with $0 \in \int(C)$. By $\CC^d, \CC^d_0$ we denote the families of all bodies and all containers, respectively\footnote{Usually, we consider Problem \ref{prb:contHomo} as being parametrized by the container and having varying sets $P$ as input. For feasibility of Problem \ref{prb:contHomo} for a fixed input $P$, it would suffice to impose the condition that $P$ be contained in some affine subspace parallel to $\aff(C)$. As this condition is rather technical and yields no further insight, we restrict to full-dimensional containers; and, as the problem is invariant under translation of the container, we simply assume that $0 \in \int(C)$ for convenience.}.

We write $\B^d:= \{x \in \R^d: \|x\|_2 \leq 1\}$ for the Euclidean unit ball and $x^Ty$ for the standard scalar product of $x,y\in \R^d$. 
By $H^{\leq}_{(a,\beta)}:= \{x \in \R^d: a^T x \leq \beta\}$ we denote the half-space induced by $a\in \R^d$ and $\beta \in \R$, bounded by the hyperplane  $H^{=}_{(a,\beta)}:= \{x \in \R^d: a^T x = \beta\}$. For a convex body $C \in \CC^d$, we write $C^\circ:=\{a \in \R^d: a^T x \leq 1 ~\forall x \in C \}$ for its polar.

For $k \in \{1, \dots, d\}$, a $k$-simplex is the convex hull of $k+1$ affinely independent points. Additionally, let $T^d \in \CC^d$ denote some regular $d$-simplex. Orientation and edge length are not specified as they will be of no interest here.

Finally, for fixed $C\in \CC_0^d$ we denote by $c_P$ a possible center for $P$, i.e. a point such that $P \subseteq c_P +  R(P,C)C$. (Notice, that for general $C$, the center $c_P$ might not be unique.) 

\subsection{Core-Sets and Core-Radii}
As already pointed out in the introduction the concept of $\varepsilon$-core-sets has proved very useful for the special case of the Euclidean MEB problem. Here, we introduce two slightly different definitions for the general {\MCP}: core-sets and center-conform core-sets together with a series of radii closely connected to them. The explicit distinction between the two types of core-sets should help to overcome possible confusion founded in the use of the term core-set for both variants in earlier publications.

\begin{defi}[Core-radii and $\varepsilon$-Core-sets]  \label{defi:coreSetsRadii}%
  For $P \subset \R^d$,  $C \in \CC^d_0$, and $k= 1, \dots, d$, we call $$R_k(P,C):= \max \{R(S,C): S\subseteq P,~ |S| \leq k+1\}$$ 
  the \emph{$k$-th core-radius} of $P$.
  
  \noindent Let $\varepsilon \ge 0$. A subset $S \subseteq P$ such that 
\begin{equation}R(S,C) \le R(P,C) \le (1 + \varepsilon)R(S,C)\end{equation}
  will be called an \emph{$\varepsilon$-core-set} of $P$ (with respect to $C$). 
  
 An $\varepsilon$-core-set $S\subseteq P$ which has the additional property,
 that there exists a center $c_S$ of $S$, such that
\begin{equation}P \subset c_S + (1+\varepsilon)R(S,C)C, \end{equation}
will be called a \emph{center-conform $\varepsilon$-core-set} of $P$ (with respect to $C$).

All three notions are illustrated in Figure \ref{fig:coreSets}.

\end{defi}

\begin{figure}[h]
\centering
\includegraphics[width= 0.3\textwidth]{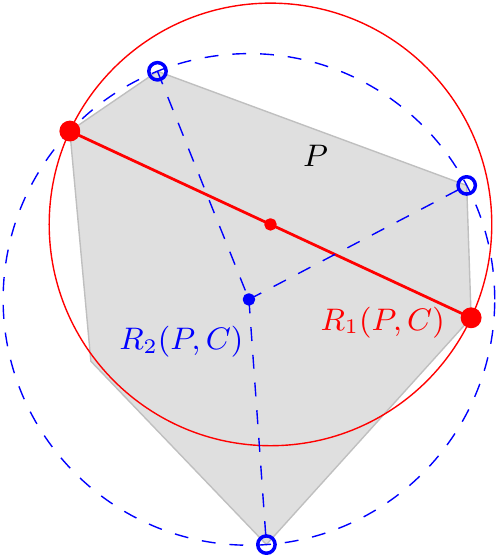}
\hfill
\includegraphics[width= 0.3\textwidth]{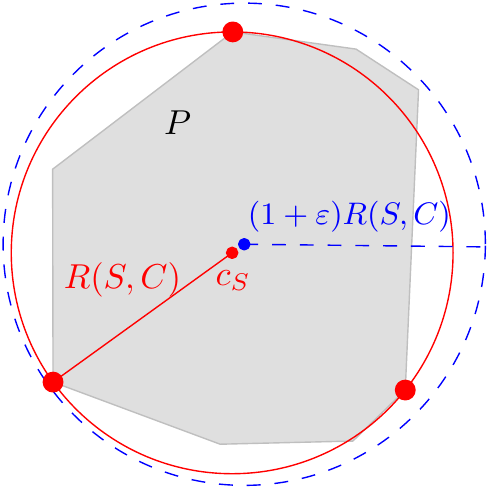}
\hfill
\includegraphics[width= 0.3\textwidth]{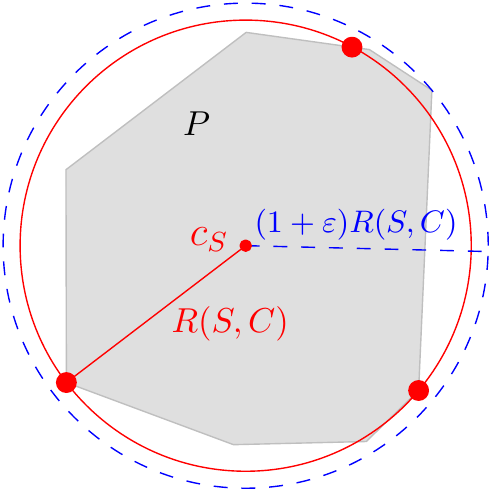}
\caption{Illustration of Core-Radii and Core-Sets. Left: The red filled and the blue empty points define $R_1(P,C)$ and $R_2(P,C)$, respectively. Middle (Right): For some $\varepsilon > 0$, the red filled point set $S$ forms a (center-conform) $\varepsilon$-core-set of $P$ of size 3.  (In all three cases $C= \B^2$.) \label{fig:coreSets}}
\end{figure}

By definition, every center-conform $\varepsilon$-core-set is also an $\varepsilon$-core-set. It will be shown in Lemma \ref{lem:centerConformity} that an $\varepsilon$-core-set is a center-conform $\varepsilon'$-core-set for an $\varepsilon'$ slightly greater than $\varepsilon$, if $C=\B^d$.
 
 Surely, if one is only interested in an approximation of $R(P,C)$ the knowledge of a good core-set
 suffices. A center-conform core-set $S$ carries the additional information of a
 center $c_{S}$ of $S$ that can be used to cover $P$. However, if the center
 of $S$ is not unique, it may not be possible to actually determine which of the
 centers of $S$ is suitable, when $S$ is the only information about $P$ to be
 considered (cf. Remark \ref{rem:centerConformBoxes}).

 We present lower bounds on the sizes of core-sets (these are also lower bounds on the size of center-conform core-sets),
 and we note that most existing positive results (via construction algorithms) already hold for center-conform core-sets. 
 When searching for lower bounds, we use the fact that there exist $\varepsilon$-core-sets of size at most $k+1$ 
 if and only if the ratio ${R(P,C)}/{R_k(P,C)}$ is less than or equal to $1+\varepsilon$. This allows us to transfer the size-of-core-sets problem to bounding the ratio between the core-radii of $P$.

\medskip As already observed in \cite{gritzmannKlee-92}, the reason for restricting the core-radii to $k \le d$ follows directly from Helly's Theorem 
 (see e.g. \cite{dgk-63}). We need a slightly more general statement here, which we prove in the following lemma for completeness. However, the main part of the proof is parallel to the one in \cite{gritzmannKlee-92} for balls. Figure \ref{fig:0CoreSets} also illustrates the situation.

\begin{lem}[0-Core-Sets] \label{lem:0CoreSets}%
Let $P\in \CC^d$, $C\in \CC_0^d$ and $\dim(P) \le k \le d$. Then  $R_k(P,C) =
R(P,C)$, i.e. there exist (center-conform) 0-core-sets of size at most $\dim(P)+1$ for all $P$ and $C$.

Furthermore, for $k\leq \dim(P)$, there always exists a simplex $S\subseteq P$ such that $\dim(S)= k$ and $R(S,C)= R_k(P,C)$.
\end{lem}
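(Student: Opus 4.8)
\emph{Reductions for the identity.}
Since a homothet $c+\rho C$ is convex it contains $S$ iff it contains $\conv(S)$, so $R(S,C)=R(\conv S,C)$; and $R(\cdot,C)$ is monotone under inclusion. Hence $R_j(P,C)\le R_k(P,C)\le R(P,C)$ whenever $0\le j\le k\le d$, and it suffices to prove $R_m(P,C)\ge R(P,C)$ with $m:=\dim(P)$: once this is known, $R(P,C)=R_m(P,C)\le R_k(P,C)\le R(P,C)$ for every $k\in\{m,\dots,d\}$. Recalling that $P\subseteq c+\rho C$ is equivalent to $c\in\bigcap_{p\in P}(p-\rho C)$ (the set of admissible centers), the bound $R_m(P,C)\ge R(P,C)$ amounts to the following: putting $\rho:=R_m(P,C)$, every $\le(m+1)$-element $S\subseteq P$ satisfies $R(S,C)\le\rho$, hence $\bigcap_{p\in S}(p-\rho C)\ne\emptyset$; we must deduce that the whole family $\{p-\rho C:p\in P\}$ of convex bodies has a common point, for this gives $R(P,C)\le\rho$. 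A standard compactness argument reduces this to finite $P$.

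\emph{The Helly step and the main obstacle.}
For $m=d$ this is exactly Helly's theorem. For $m<d$ one has to exploit the extra structure that the bodies $p-\rho C$ are all translates of the single body $-\rho C$ by vectors lying in $\aff(P)$, an affine subspace of dimension $m$, and the content of the lemma is that this lowers the Helly number from $d+1$ to $m+1$. In the Euclidean case $C=\B^d$ this is the Gritzmann--Klee argument: the center $c^*$ of the minimal enclosing ball of $P$ lies in $\aff(P)$ — orthogonal projection onto $\aff(P)$ only decreases distances to points of $P$ — and it lies in the convex hull of the contact set $P\cap\bd(c^*+R(P,C)\B^d)$, so Carathéodory \emph{inside} the $m$-dimensional flat $\aff(P)$ selects $m+1$ contact points, whose minimal enclosing ball is again $c^*+R(P,C)\B^d$. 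For a general container I would run the analogous argument through the optimality conditions of $\min_c\max_{p\in P}\gamma_C(p-c)$, where $\gamma_C$ denotes the Minkowski gauge of $C$: optimality of $c^*$ means $0$ belongs to the convex hull of the subgradients of $\gamma_C(p-\cdot)$ at $c^*$ taken over the contact points $p$, and one needs to show that this certificate can be confined to an $m$-dimensional subspace, so that Carathéodory produces only $m+1$ contact points. Replacing the Euclidean orthogonal-projection trick by a duality argument valid for arbitrary gauges is the step I expect to be the real work.

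\emph{Center-conformity.}
This is immediate: if $R(S,C)=R(P,C)=:\rho^*$ with $S\subseteq P$ and $c^*$ is any center of $P$, then $S\subseteq P\subseteq c^*+\rho^*C$ shows $c^*$ is also a center of $S$, and $P\subseteq c^*+(1+0)R(S,C)C$, so $S$ is a center-conform $0$-core-set.

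\emph{The simplex statement.}
For $k\le\dim(P)$ I would argue by induction on $k$, the case $k=1$ being immediate (a pair realizing $R_1(P,C)$ must consist of distinct points, as $\dim(P)\ge1$, hence is a $1$-simplex). In general pick $S^*\subseteq P$ with $|S^*|\le k+1$ and $R(S^*,C)=R_k(P,C)$, the maximum being attained by a compactness argument. If $\dim(\conv S^*)=k$, then necessarily $|S^*|=k+1$ and $S^*$ is affinely independent, so $\conv S^*$ is the desired $k$-simplex. If $k':=\dim(\conv S^*)<k$, apply the already-proved identity to the body $\conv S^*$ and, by the inductive hypothesis, its simplex statement for index $k'$, to get a $k'$-simplex $T^*\subseteq\conv S^*\subseteq P$ with $R(T^*,C)=R_{k'}(\conv S^*,C)=R(\conv S^*,C)=R(S^*,C)=R_k(P,C)$. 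Since $k'<k\le\dim(P)$ there is a point $q\in P\setminus\aff(T^*)$, and then $T^*\cup\{q\}$ is a $(k'+1)$-simplex in $P$ with
$R_k(P,C)\le R(T^*\cup\{q\},C)\le R_{k'+1}(P,C)\le R_k(P,C)$,
hence of radius exactly $R_k(P,C)$. Iterating this spreading step raises the simplex dimension by one at a time until a $k$-simplex of radius $R_k(P,C)$ is reached.
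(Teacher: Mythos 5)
Your reductions, the center-conformity argument, and the case $k=d$ (plain Helly in $\R^d$ applied to the compact convex sets $p-R_d(P,C)C$) are fine and agree with the paper; your Euclidean sketch for $\dim(P)\le k<d$ is also the correct Gritzmann--Klee argument for $C=\B^d$. The genuine gap is exactly where you flag it: for a general container you never supply the ``duality argument valid for arbitrary gauges'' that would confine the optimality certificate to an $m$-dimensional subspace, and this is not a routine detail that can be filled in. The paper's own proof disposes of it with the words ``Helly's Theorem applied within $\aff(P)$'', which tacitly assumes that admissible centers can be taken in $\aff(P)$ --- true for $\B^d$ by orthogonal projection, but false for general $C$ --- and the step cannot be repaired: a family of translates of $-\rho C$ by vectors in an $m$-flat does not have Helly number $m+1$.

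Concretely, let $C\subseteq\R^3$ be the regular tetrahedron with vertices $v_1=(1,1,1)$, $v_2=(1,-1,-1)$, $v_3=(-1,1,-1)$, $v_4=(-1,-1,1)$, so that $C=\{x:-v_i^Tx\le 1,\ i=1,\dots,4\}$, and let $P=\{(x,y,0):x^2+y^2\le \tfrac{1}{2}\}$ be the disk inscribed in the medial-square cross-section $C\cap\{z=0\}$. Then $P$ touches each facet of $C$ in exactly one point $p_i=\tfrac{1}{2}(-v_{i,1},-v_{i,2},0)$, and each $p_i$ lies in the relative interior of its facet, so $R(P,C)=1$ by Corollary \ref{cor:optCond}. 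For any $S\subseteq P$ with $|S|\le 3$, however, $\conv(S)\cap\bd(C)\subseteq S\cap\{p_1,\dots,p_4\}$ (a chord of the disk meets its boundary circle only in its endpoints), the unique admissible outer normal at $p_i$ is $-v_i$, and $0$ lies in the convex hull of no proper subset of $\{-v_1,\dots,-v_4\}$ because $0=\tfrac{1}{4}\sum_i v_i$ is the unique convex representation; Theorem \ref{theo:optCond} then forces $R(S,C)<1$, whence $R_2(P,C)<R(P,C)$ although $\dim(P)=2$. So the identity you are trying to prove (and with it the subgradient-confinement step you propose) is false for general containers in the regime $\dim(P)\le k<d$; your induction for the simplex statement inherits the same defect, since it invokes the identity for the lower-dimensional body $\conv S^*$. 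Only the conclusion $R_d(P,C)=R(P,C)$ --- the case the paper's main theorems actually rely on --- is safe, and there your argument and the paper's coincide.
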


\begin{proof}
  Clearly, $R_k(P,C) \le R(P,C)$. To show $R_k(P,C)\ge R(P,C)$ for $k \ge \dim(P)$, observe that by definition of $R_k(P,C)$, every $S \subset P$ 
  with $|S| \le k+1$ can be covered by a copy of $R_k(P,C)C$. This means $ \bigcap_{p\in S} (p - R_k(P,C) C )\neq \emptyset $ for all such $S$.
  Now, as the sets $p - R_k(P,C) C$ are compact, Helly's Theorem applied within $\aff(P)$ yields $\bigcap_{p\in P} (p - R_k(P,C)C) \neq \emptyset$.
  Thus the whole set $P$ can be covered by a single copy of $R_k(P,C) C$.  
Moreover, by applying Helly's Theorem within $\aff(S)$ one may always assume that the finite set $S$ with $R(S,C)= R_k(P,C)$ is affinely independent. Hence,
if $|S|\leq k \leq \dim(P)$ one may complete $S$ to the vertex set of a $k$-dimensional simplex within $P$.
\end{proof}
\begin{figure}[h]
\centering
\includegraphics[width=0.49\textwidth]{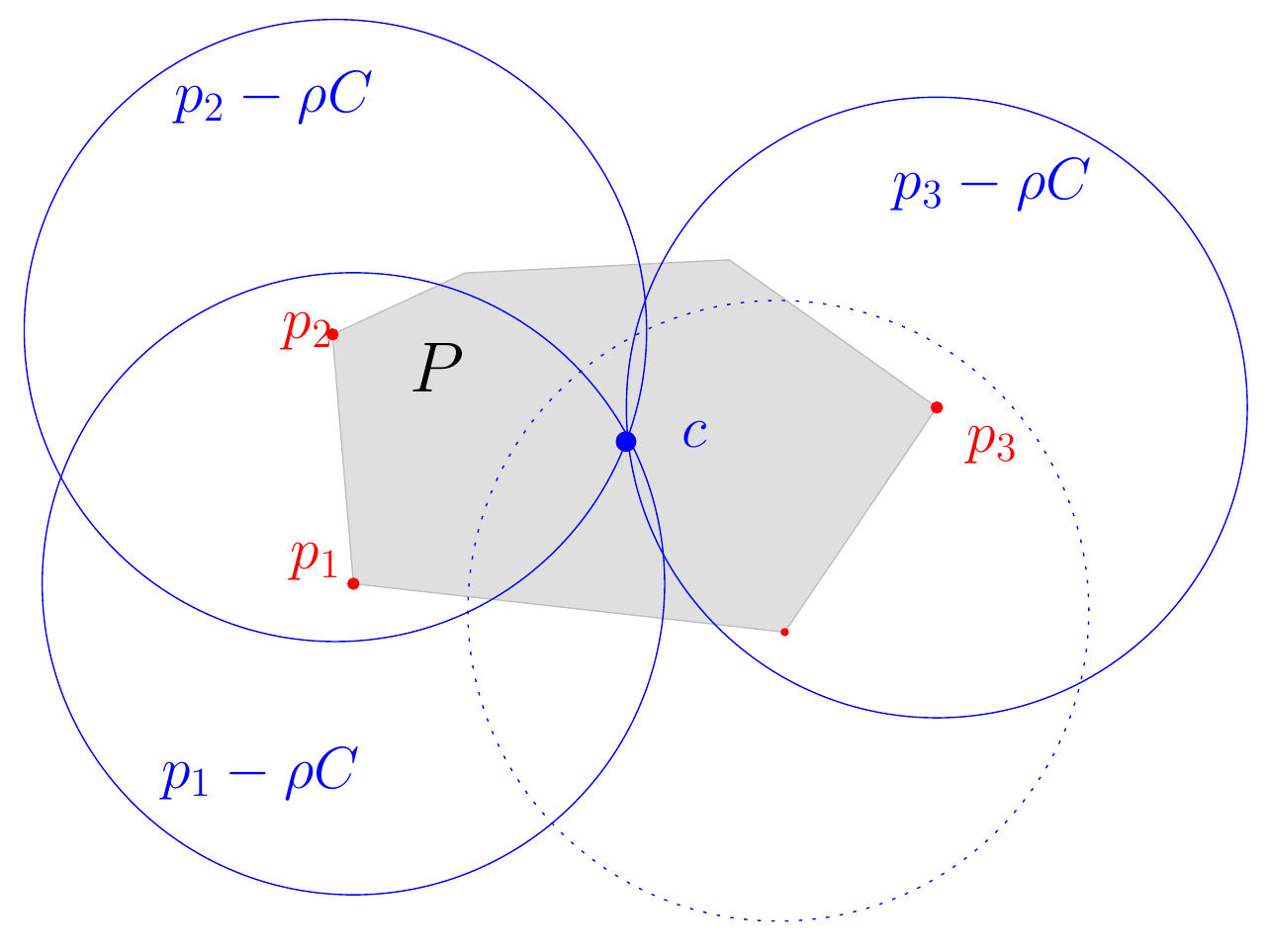}
\hfill
\includegraphics[width=0.49\textwidth]{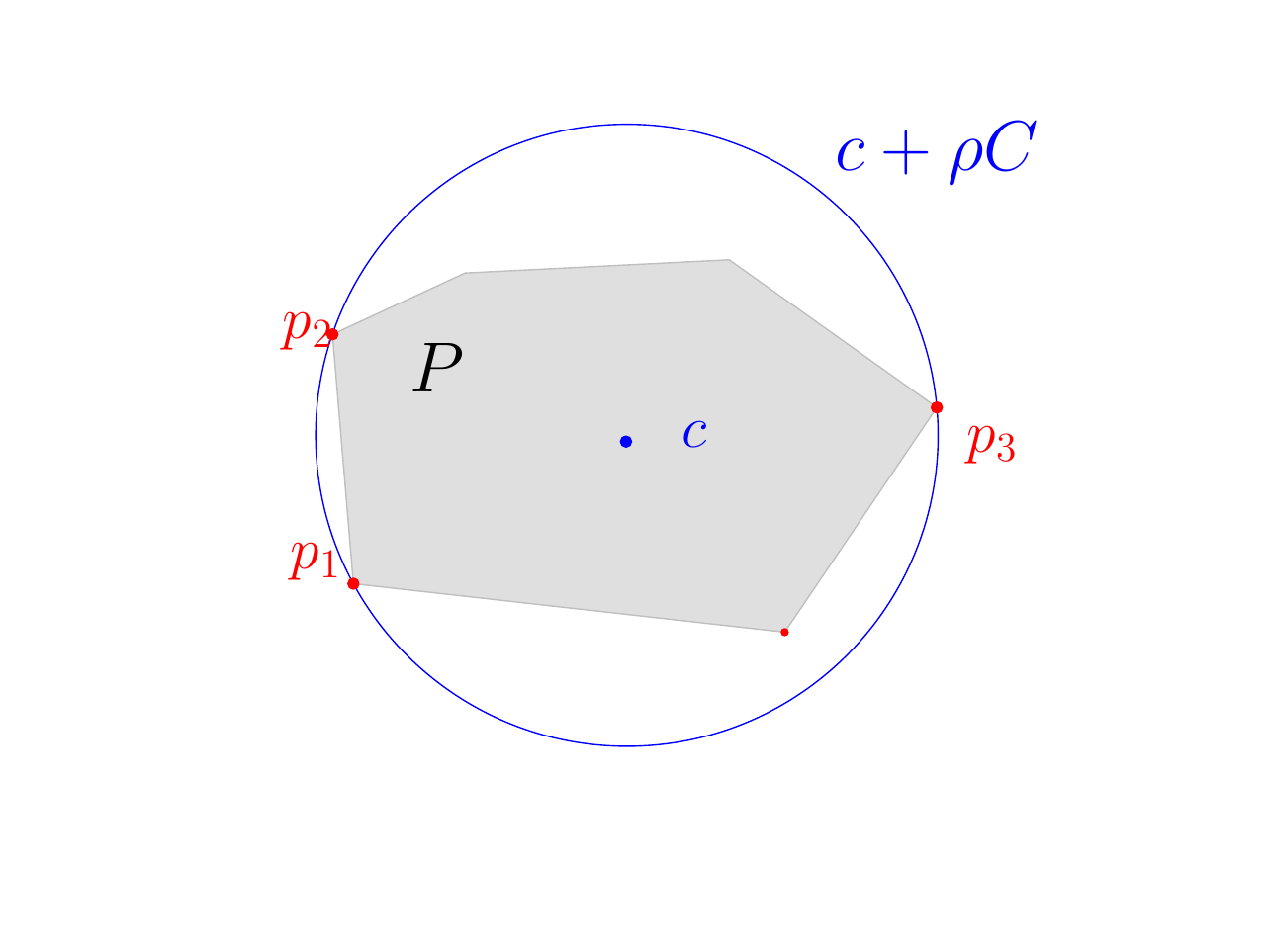}
\caption{The duality argument that makes Helly's Theorem applicable for Containment under Homothetics: $ \bigcap_{p\in S} (p -\rho C )\neq \emptyset \Leftrightarrow R(S,C) \leq \rho$. \label{fig:0CoreSets}}
\end{figure}
\medskip

\subsection{Optimality Conditions}
A characterization of optimal solutions for the MEB case of the {\MCP} can already be 
found in \cite{bonnesenFenchelT}. A corollary, known as 'half-space lemma', 
proved very useful in the construction of fast algorithms for MEB
(see, e.g. \cite{badoiu,bhpi-02,halbraumlemma}). 
However, to our knowledge, the literature does not contain any explicit optimality 
conditions for the general \MCP. 

For brevity, $P$ is said to be \emph{optimally contained} in $C$, if $P \subseteq C$ but there is no $c \in \R^d$ 
and $\rho < 1$ such that $P \subseteq c + \rho C$.  

\begin{theo}[Optimality condition for Problem \ref{prb:contHomo}] \label{theo:optCond}%
Let $P\in \CC^d$ and $C \in \CC_0^d$. Then $P$ is optimally contained in $C$ if and only if 
\begin{enumerate}[(i)]
\item $P \subseteq C$ and
\item for some $2 \le k \le d+1$, there exist $p_1, \dots, p_k \in P$ and hyperplanes 
 $H^=_{(a_i, 1)}$ supporting $P$ and $C$ in $p_i$, $i=1,\dots,k$ such that $0 \in \conv\{a_1, \dots, a_k\}$.
\end{enumerate}
The theorem stays valid even if one allows $C$ to be unbounded. 
\end{theo}

\begin{figure}[h]
\centering
\includegraphics[width=0.4\textwidth]{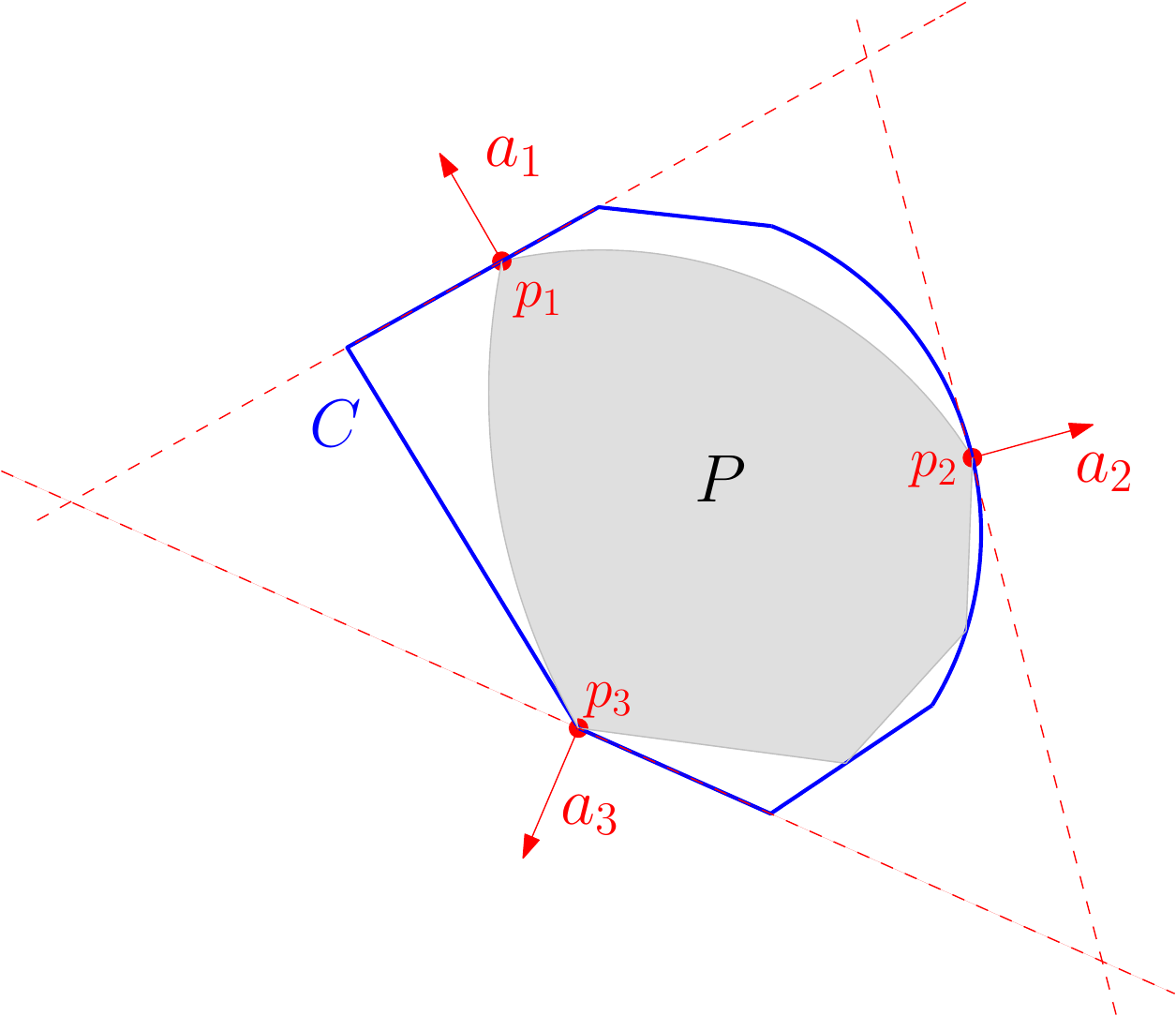}

\caption{The necessary and sufficient conditions from Theorem \ref{theo:optCond}: Condition (ii) is fulfilled by the three points $p_1, p_2, p_3$ and the three hyperplanes with outer normals $a_1,a_2,a_3$, all highlighted in dashed red. Note that (in general) $\conv\{p_1,\dots,p_k\}$ is optimally contained in $\bigcap_{i=1}^k H^\le_{(a_i, 1)}$.} \label{fig:optCond}
\end{figure}

\begin{proof}
Let $C \in \CC_0^d$ be given as $C= \bigcap_{a\in N} H^{\leq}_{(a,1)}$ where $N=\bd(C^\circ)$ is the set of outer normals of $C$. 

First, assume (i) and (ii) hold. By (i), $R(P,C)\leq 1$. Now suppose $R(P,C) <1$. Then there exists $c\in \R^d$ 
and $0< \rho < 1$ such that $c + P\subseteq \rho C$. From (ii) follows $P \cap \bd(C) \neq \emptyset$ and therefore 
$c \neq 0$. Moreover, as $c + P\subseteq \rho C$, it follows $\sup_{a\in N} a^T (c+p_i)\leq \rho$ and in 
particular, $a_i^T (c+p_i)\leq \rho < 1$ for all $i$. Now, as $0\in\conv\{a_1, \dots, a_k\}$, there exist 
$\lambda_i \geq 0$ with $\sum_i \lambda_i =1$ such that $\sum_i \lambda_i a_i=0$ and $\sum_i \lambda_i a_i^T (c+ p_i) < 1$.
Using $a_i^Tp_i=1$ one obtains $\sum_i \lambda_i a_i^T c < 0$, an obvious contradiction. 
Thus, conditions (i) and (ii) imply optimality.

Now, let $P$ be optimally contained in $C$. The following part of the proof is illustrated in Figure \ref{fig:proofOptCond}. As $C$ is compact, we can apply Lemma \ref{lem:0CoreSets} which yields $k\leq d+1$ points $p_i \in P\cap \bd(C)$ for $i=1,\dots, k$ such that 
\begin{equation}
  R(\conv\{p_1,\dots, p_k\}, C) = 1.
  \label{eq:r=1}
\end{equation} 
Let $A = \{ a\in N: \exists i \in \{1, \dots, k\} ~s.t.~ a^T p_i =1 \}$. Since $P \subset C$, for $a \in A$, we have that $a^T p\leq 1$ for all $p\in P$, and $a^T p_i = 1$ for at least one $i$ by definition of $A$. We will show that $0 \in \conv(A)$. 
The statement that there exists a set of at most $d+1$ outer normals with $0$ in their convex hull then follows from 
Caratheodory's Theorem (see \cite{dgk-63}). Assume, for a contradiction, 
that $0\not \in \conv(A)$. Then $0$ can be strictly separated from $\conv(A)$, i.e. there exists $y\in\R^d$ with 
$a^T y \geq 1$ for all $a\in A$. Now, for $A' = \{a\in N: a^T y \leq 0\}$ there exists $\varepsilon > 0$ such that 
$(A'+ \varepsilon\B^d) \cap A = \emptyset$, i.e. $a^T p_i < 1 - \varepsilon$ for all $a\in A'$ and therefore 
$$a^T (p_i - \frac{\varepsilon}{\|a\|\|y\|} y) = a^T p_i + \varepsilon \frac{-a^T y}{\|a\|\|y\|} < 1.$$ 
Moreover, if $a\in N \setminus A'$ then $$a^T (p_i - \frac{\varepsilon}{\|a\|\|y\|} y) = a^T p_i -\varepsilon 
\frac{a^T y}{\|a\|\|y\|} < 1.$$ 

As $0\in \int(C)$, we know that $N\subseteq C^\circ$ is bounded and therefore there exists $\alpha > 0$ such that $\|a\| \leq \alpha$ for all $a\in N$.
Thus, altogether, $p_i - \frac{\varepsilon}{\alpha\|y\|} y \in \int(C)$ 
for all $i$, which contradicts (\ref{eq:r=1}).
\begin{figure}[h]
\centering
\includegraphics[width=0.9\textwidth]{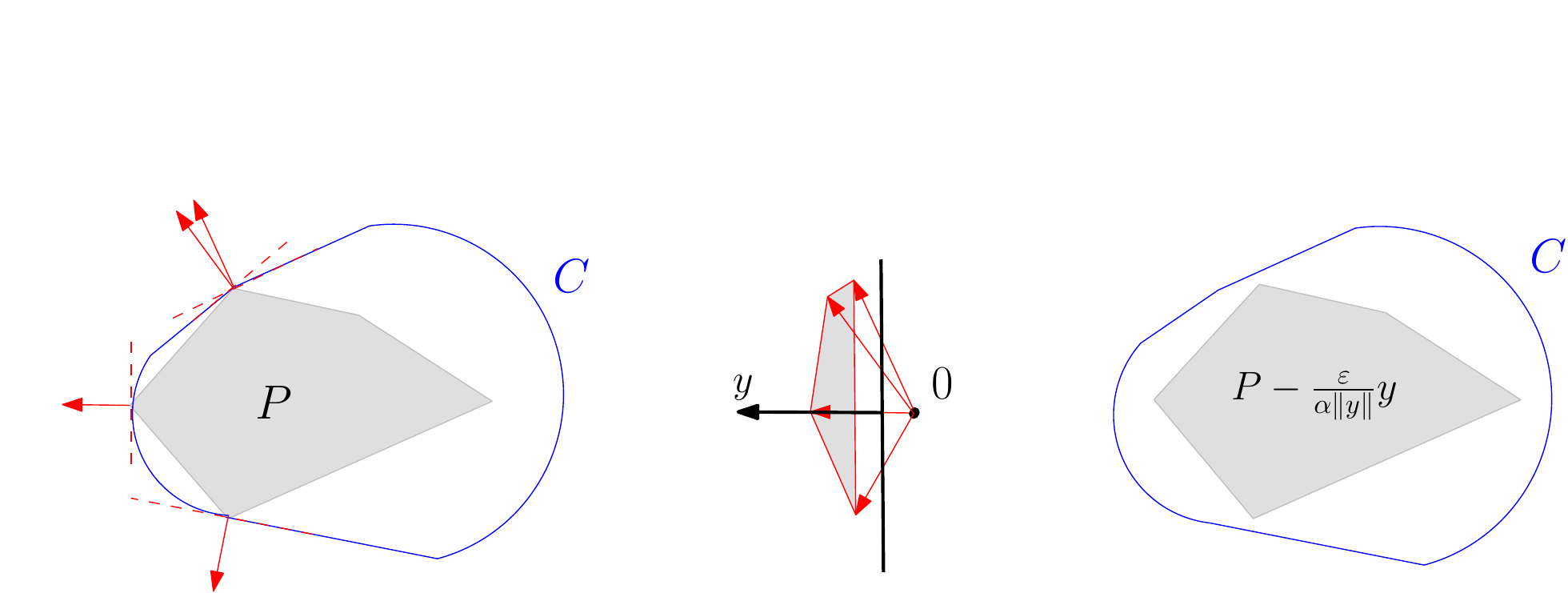}

\caption{The idea of the proof of Theorem \ref{theo:optCond}: If the outer normals in the points where $P$ touches $C$ do not contain the origin in their convex hull, the separation theorem yield a direction $y \in \R^d \backslash\{0\}$ such that $P - \lambda y \subseteq \int(C)$ for a sufficiently small $\lambda \geq 0$.\label{fig:proofOptCond}}
\end{figure}
Finally, observe that the last statement about a possibly unbounded $C$ can be obtained from the one for bounded containers by 
considering a new container $C'= C \cap C''$ where $C'' \in \CC^d_0$ such that $P \subset C''$ and
$P \cap \bd(C'') = \emptyset$.
\end{proof}

\medskip

{\bf Remark:} Besides the direct geometric proof of Theorem \ref{theo:optCond} as stated above, 
 it is also possible to derive the result from the Karush-Kuhn-Tucker conditions in convex optimization (see e.g. \cite[Corollary 28.3.1]{rockafellar}).

As we assume that $C$ has non-empty interior, '$P$ optimally contained in $C$' implicitly implies $|P| > 1$. So, in case $P= \{p\}$, Theorem \ref{theo:optCond} is not applicable and we note for completeness, that in this case, $P$ is optimally contained in $p + 0\cdot C$.

\begin{cor} \label{cor:optCond}%
 Let $P \in \CC^d$ and $C$ a polytope in $\R^d$. If $P\subseteq C$ and $P$ touches every facet of $C$, 
 then $P$ is optimally contained in $C$.
\end{cor}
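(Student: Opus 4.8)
Since optimal containment is invariant under translating the container, I may assume $0 \in \int(C)$, i.e. $C \in \CC_0^d$, and fix the irredundant facet description $C = \bigcap_{i=1}^m H^{\le}_{(a_i,1)}$, so that each $H^{=}_{(a_i,1)}$ is the affine hull of a facet $F_i$ of $C$. The plan is simply to verify the two conditions of Theorem \ref{theo:optCond} for this $P$ and $C$.

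Condition (i) is exactly the hypothesis $P \subseteq C$. For condition (ii), I use that $P$ touches every facet: for each $i$ choose a point $p_i \in P \cap F_i$. Then $a_i^T p_i = 1$, while $a_i^T p \le 1$ for all $p \in P$ because $P \subseteq C \subseteq H^{\le}_{(a_i,1)}$; hence $H^{=}_{(a_i,1)}$ supports both $P$ (at $p_i$) and $C$ (along $F_i$), for $i = 1,\dots,m$, and $m \ge d+1 \ge 2$ since $C$ is a full-dimensional polytope.

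What remains is to exhibit a subfamily of the $a_i$ whose convex hull contains the origin. I would show $0 \in \conv\{a_1,\dots,a_m\}$ from boundedness of $C$: if this failed, then $\conv\{a_1,\dots,a_m\}$, being compact, could be strictly separated from $0$, producing $v \ne 0$ with $a_i^T v < 0$ for all $i$; but then $x + tv \in C$ for every $x \in C$ and every $t \ge 0$, contradicting that the polytope $C$ is bounded. Once $0 \in \conv\{a_1,\dots,a_m\}$ is known, Carathéodory's theorem lets me keep at most $d+1$ of the normals with $0$ still in their convex hull, so condition (ii) of Theorem \ref{theo:optCond} holds with some $2 \le k \le d+1$. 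Theorem \ref{theo:optCond} then gives that $P$ is optimally contained in $C$.

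The only mildly delicate point is the step $0 \in \conv\{a_1,\dots,a_m\}$; the rest is a direct check against Theorem \ref{theo:optCond}. One could also bypass the separation argument entirely by invoking the standard fact that the outer facet normals of a bounded polytope positively span $\R^d$, which yields $0 \in \int\conv\{a_1,\dots,a_m\}$ at once.
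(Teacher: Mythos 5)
Your proof is correct, and its skeleton is the same as the paper's: both reduce the corollary to checking condition (ii) of Theorem \ref{theo:optCond}, using the touching points $p_i \in P \cap F_i$ and the facet normals $a_i$. The only real difference is how $0 \in \conv\{a_1,\dots,a_m\}$ is obtained. The paper dispatches this in one line by quoting the Minkowski relation for polytopes: with weights $\lambda_i = \vol_{d-1}(F_i)$ one has $\sum_{i=1}^m \lambda_i a_i = 0$, which exhibits an explicit convex combination (after normalization). You instead argue by contradiction via strict separation: if $0$ could be separated from the compact set $\conv\{a_1,\dots,a_m\}$, there would be a direction $v$ with $a_i^T v < 0$ for all $i$, along which $C$ contains a ray, contradicting boundedness; then Carath\'eodory trims the family to at most $d+1$ normals (and $k \ge 2$ is automatic since no single facet normal is zero). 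Your route is more elementary and self-contained — it needs nothing beyond separation and the boundedness of $C$ — and it in fact mirrors the separation argument the paper itself uses inside the proof of Theorem \ref{theo:optCond}; the paper's route is shorter but leans on a classical fact from \cite{bonnesenFenchelT}. Both are sound.
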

\begin{proof}
If $C$ is a polytope with facets $F_i = C \cap H^{=}_{(a_i, 1)}$, $i=1,\dots, m$, it is well known 
\cite{bonnesenFenchelT} that, with the choice $\lambda_i = \vol_{d-1}(F_i)$, one has 
$\sum_{i=1}^m \lambda_i a_i = 0$.
\end{proof}

\begin{cor}[Optimality condition for the MEB problem / Half-space lemma] \label{cor:optCondEuclidean}%
Let $P \in \CC^d$. If $P\subseteq \B^d$, then the following are equivalent:

\begin{enumerate}[(i)]
\item $R(P, \B^d)= 1$.
\item For some $k\leq d+1$, there exist $p_1, \dots, p_k \in P \cap \bd(\B^d)$ such that $0 \in \conv \{p_1,\dots, p_k\}$.
\item $0$ can not be strictly separated from $P \cap \bd(\B^d)$.
\item $P \cap \bd(\B^d) \cap H \neq \emptyset$ for every half-space $H$ containing the origin in its boundary. (\emph{Half-space lemma})
\end{enumerate}
\end{cor}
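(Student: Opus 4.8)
The plan is to deduce everything from Theorem~\ref{theo:optCond}, exploiting the fact that $\B^d$ is self-polar, and then to rephrase the resulting condition using only elementary convexity. Throughout I would abbreviate $Q := P \cap \bd(\B^d)$, which is compact because $P$ is. I would prove the chain $(i)\Leftrightarrow(ii)$, then $(ii)\Leftrightarrow(iii)$, then $(iii)\Leftrightarrow(iv)$.

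For $(i)\Leftrightarrow(ii)$ the one observation that carries the argument is that a hyperplane of the form $H^=_{(a,1)}$ supports $\B^d$ precisely when $a \in \bd(\B^d)$, and then, by strict convexity, it touches $\B^d$ only in the point $a$ itself; in other words, the outer normal of $\B^d$ at a boundary point $p$ is $p$. Hence, applied with $C=\B^d$, condition (ii) of Theorem~\ref{theo:optCond} forces each hyperplane supporting both $P$ and $\B^d$ at $p_i$ to be $H^=_{(p_i,1)}$ with $a_i = p_i \in Q$ (note $p_i \in P\subseteq\B^d$ together with $a_i^Tp_i=1$, $\|a_i\|_2=1$ already gives $p_i=a_i\in\bd(\B^d)$ by Cauchy--Schwarz), so $0\in\conv\{a_1,\dots,a_k\}$ becomes $0\in\conv\{p_1,\dots,p_k\}$, which is exactly statement (ii) of the corollary (the bound $k\le d+1$ is inherited, and $k\ge 2$ is automatic since no single unit vector is $0$). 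Conversely, given $p_1,\dots,p_k\in Q$ with $0\in\conv\{p_i\}$, each $H^=_{(p_i,1)}$ supports $\B^d$, and, since $P\subseteq\B^d$ and $p_i\in P$, it also supports $P$ at $p_i$; thus Theorem~\ref{theo:optCond} applies. Because $P\subseteq\B^d$ is assumed, ``$P$ optimally contained in $\B^d$'' is the same as $R(P,\B^d)=1$, which closes $(i)\Leftrightarrow(ii)$.

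For $(ii)\Rightarrow(iii)$, I would just note that a point lying in $\conv(Q)$ cannot be strictly separated from $Q$. For $(iii)\Rightarrow(ii)$, I would use that $Q$ compact implies $\conv(Q)$ compact, hence closed, so ``$0$ cannot be strictly separated from $Q$'' forces $0\in\conv(Q)$, and Caratheodory's Theorem (\cite{dgk-63}) then writes $0$ as a convex combination of at most $d+1$ points of $Q$. For $(iii)\Leftrightarrow(iv)$: a half-space with $0$ on its boundary is $\{x\in\R^d : y^Tx\le 0\}$ for some $y\ne 0$, and it misses $Q$ iff $y^Tq>0$ for all $q\in Q$; by compactness of $Q$ this last condition, $\min_{q\in Q} y^Tq>0$, is equivalent to the hyperplane $\{x : y^Tx = \tfrac12\min_{q\in Q} y^Tq\}$ strictly separating $0$ from $Q$. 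Taking contrapositives yields the equivalence; the degenerate cases ($Q=\emptyset$, in particular $|P|=1$) are harmless, since then all four statements are false.

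The argument has no real obstacle beyond bookkeeping: the only substantive step is the identification of the outer normals of $\B^d$ with the boundary points they touch, which is precisely where the Euclidean (self-polar) structure enters and lets one replace the normal-cone data in the general Theorem~\ref{theo:optCond} by points of $P$ itself. The rest are standard facts — the separation theorem, Caratheodory's Theorem, and compactness of convex hulls of compact sets — and the only places needing a little care are the use of compactness of $Q$ in $(iii)\Rightarrow(ii)$ and $(iii)\Leftrightarrow(iv)$, plus checking $k\ge 2$ so that Theorem~\ref{theo:optCond} is genuinely applicable.
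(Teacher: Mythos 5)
Your proof is correct and follows exactly the route the paper intends: the corollary is stated without proof as a direct consequence of Theorem~\ref{theo:optCond}, and your key step — identifying the outer normal $a_i$ of $\B^d$ at a contact point with the point $p_i$ itself via Cauchy--Schwarz, so that condition (ii) of the theorem collapses to $0\in\conv\{p_1,\dots,p_k\}$ — is precisely the self-polarity argument that makes the specialization work. The remaining equivalences via compactness of $\conv(Q)$, strict separation, and Caratheodory are the standard bookkeeping the paper leaves implicit, and you handle the degenerate cases correctly.
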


\subsection{Side Notes}
\begin{lem}[Center-conformity for MEB] \label{lem:centerConformity}%
If $P\in \CC^d$, $\varepsilon > 0$ and $S\subseteq P$ is an $\varepsilon$-core-set of $P$ with respect to $\B^d$,  then $S$ is also a center-conform 
     $(\varepsilon+\sqrt{2\varepsilon+\varepsilon^2})$-core-set of $P$.
\end{lem}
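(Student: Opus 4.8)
The plan is to exploit the fact that for $C=\B^d$ the center of the minimal enclosing ball is unique, and to show that this center $c_S$ of $S$ is automatically close to an optimal center $c_P$ of $P$. Write $r:=R(S,\B^d)$ and $R:=R(P,\B^d)$, so that the core-set hypothesis reads $r\le R\le(1+\varepsilon)r$; we may assume $r>0$, since otherwise $S$ is a single point, $R=0$, and $P=S$, making the statement trivial. Fix an optimal center $c_P$, i.e.\ $P\subseteq c_P+R\B^d$, and let $c_S$ be the unique point with $S\subseteq c_S+r\B^d$. It suffices to prove
$$\|c_S-c_P\|\le\sqrt{R^2-r^2},$$
because then for every $p\in P$ the triangle inequality gives $\|p-c_S\|\le\|p-c_P\|+\|c_P-c_S\|\le R+\sqrt{R^2-r^2}$, and it only remains to check $R+\sqrt{R^2-r^2}\le(1+\varepsilon')r$ with $\varepsilon'=\varepsilon+\sqrt{2\varepsilon+\varepsilon^2}$.

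First I would establish the displayed inequality. Since $S$ is optimally contained in $c_S+r\B^d$, the half-space lemma (Corollary \ref{cor:optCondEuclidean}, equivalently Theorem \ref{theo:optCond} applied to $(S-c_S)/r$) provides points $p_1,\dots,p_k\in S\cap\bd(c_S+r\B^d)$ and weights $\lambda_i\ge0$ with $\sum_i\lambda_i=1$ and $\sum_i\lambda_i(p_i-c_S)=0$, i.e.\ $c_S=\sum_i\lambda_ip_i$. Now expand $r^2=\|p_i-c_S\|^2=\|p_i-c_P\|^2+2(p_i-c_P)^T(c_P-c_S)+\|c_P-c_S\|^2$, multiply by $\lambda_i$ and sum over $i$; using $\sum_i\lambda_i(p_i-c_P)=c_S-c_P$ the cross term collapses to $-2\|c_S-c_P\|^2$, leaving $r^2=\sum_i\lambda_i\|p_i-c_P\|^2-\|c_S-c_P\|^2$. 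Since each $p_i\in P$ lies in $c_P+R\B^d$, this yields $\|c_S-c_P\|^2=\sum_i\lambda_i\|p_i-c_P\|^2-r^2\le R^2-r^2$, as claimed.

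Finally I would dispose of the elementary estimate. Setting $t:=R/r\in[1,1+\varepsilon]$, the inequality $R+\sqrt{R^2-r^2}\le(1+\varepsilon')r$ is equivalent to $t+\sqrt{t^2-1}\le1+\varepsilon'$; the left-hand side is increasing in $t$, so it suffices to verify it at $t=1+\varepsilon$, where it reads $1+\varepsilon+\sqrt{(1+\varepsilon)^2-1}=1+\varepsilon+\sqrt{2\varepsilon+\varepsilon^2}$, matching the choice of $\varepsilon'$ exactly. The only genuinely non-routine ingredient is the Pythagorean computation of the second paragraph, and in particular the use of the half-space lemma to guarantee that $c_S$ is a convex combination of contact points of $S$ with the boundary of its minimal enclosing ball; without this, $c_S$ could lie far from $c_P$ and no center-conformity statement with a dimension-free loss could hold. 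This is also the step that genuinely uses $C=\B^d$, which is why the lemma is stated only for the Euclidean case.
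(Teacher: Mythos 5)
Your proof is correct and follows essentially the same route as the paper's: both invoke the half-space lemma (Corollary \ref{cor:optCondEuclidean}) for the optimal ball of $S$ to establish the key estimate $\|c_S-c_P\|^2\le R(P,\B^d)^2-R(S,\B^d)^2$ and then conclude with the triangle inequality. The only difference is cosmetic --- the paper extracts a single contact point of $S$ in the half-space through $c_S$ facing away from $c_P$ and uses the resulting obtuse angle, whereas you average the same Pythagorean identity over all contact points weighted by the convex coefficients expressing $c_S$; both yield the identical intermediate bound.
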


\begin{proof}
Let $p \in P$ such that $\max_{x \in P}\|c_S-x\|_2 = \|c_S - p\|_2$. Further let $H$ be a hyperplane perpendicular to $\aff\{c_S, c_P\}$ passing through $c_S$.
Denote by $H^-$ the halfspace which is bounded by $H$ and does not contain $c_P$. Then by Corollary \ref{cor:optCondEuclidean}, there is a point $q \in S \cap H^-$ at distance $R(S,\B^d)$ of $c_S$.
Hence  
\begin{equation*}
\| c_P- c_S \|_2^2   \le \| c_P- q\|_2^2 - \|q - c_S\|_2^2  
  \le R(P, \B^d)^2 - R(S, \B^d)^2 \le (2\varepsilon+\varepsilon^2) R(S, \B^d)^2 
\end{equation*}
and
\begin{equation*}
     \|c_S-p\| \le \|c_S - c_P\| + \|c_p-p\| \le \sqrt{2\varepsilon+\varepsilon^2}
   R(S,\B^d) + R(P, \B^d) 
    = (1+\varepsilon+\sqrt{2\varepsilon+\varepsilon^2}) R(S, \B^d).
\end{equation*}
\end{proof} 

\bigskip

Choosing $P=-C$, one sees that the \emph{Minkowski asymmetry} (or the reciprocal of Minkowski's measure of symmetry, \cite[Note 14 for Section 3.1]{schneider})
\begin{equation} \label{eq:minkowskiAsymm}
 s(C):= R(-C,C)
\end{equation}
can also be expressed as a special case of containment under homothetics. 

\medskip As a further corollary of Theorem \ref{theo:optCond}, we present a very transparent proof (due to \cite{swanepoel-brandenberg-private-06}) 
for the well known fact that the Minkowski asymmetry of a body $C$ is bounded from above by $\dim(C)$. We will make use of the sharpness condition of Corollary \ref{cor:asymm} to show the sharpness of the inequality in Theorem \ref{theo:asymmIneq}.

\begin{cor}[Maximal asymmetry] \label{cor:asymm}%
For every $C\in \CC^d$, the inequalities $1 \le s(C) \le \dim(C)$ hold, with equality, if $C$ is 0-symmetric in the first and if $C$ is a $d$-simplex in the latter case.
\end{cor}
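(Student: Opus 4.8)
\textbf{Proof plan for Corollary \ref{cor:asymm}.}

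The plan is to derive both inequalities directly from Theorem \ref{theo:optCond}, applied to the optimal containment of $-C$ in $s(C)C$. First I would dispose of the lower bound: since $0 \in \int(C)$, the body $C$ contains a small ball around the origin, and $-C$ contains the reflected ball; hence $-C$ cannot be covered by $\rho C$ for $\rho < 1$, giving $s(C) \ge 1$. For the case of equality note that $s(C) = R(-C,C) = 1$ means $-C \subseteq c + C$ for some $c$, and by a volume (or recession/support function) comparison this forces $c = 0$, i.e. $C$ is $0$-symmetric; conversely $0$-symmetry clearly gives $s(C) = 1$.

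For the upper bound, set $\rho := s(C)$ and apply Theorem \ref{theo:optCond} to the pair $(-C, \rho C)$, which by definition of $s(C)$ is optimally contained after an appropriate translation; more precisely there is a center $c$ with $-C \subseteq c + \rho C$ and this containment optimal. Rescaling, $P := \frac{1}{\rho}(-C - c)$ is optimally contained in $C$, so condition (ii) yields $2 \le k \le d+1$ points $p_i \in P$ and outer normals $a_i$ with $a_i^T p_i = 1$, each $H^=_{(a_i,1)}$ supporting both $P$ and $C$, and $0 \in \conv\{a_1,\dots,a_k\}$. The key observation is that because $P$ is (up to the fixed dilatation and translation) a reflected copy of $C$, a hyperplane supporting $P$ at $p_i$ corresponds to a hyperplane supporting $C$ at the point $-\rho p_i - c \in C$ with normal $-a_i$; so the same $a_i$ are, up to sign, outer normals of $C$ at boundary points of both $C$ and $-C-c$. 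Writing out $a_i^T p_i = 1$ together with the translated support conditions and using $\sum \lambda_i a_i = 0$ for suitable $\lambda_i \ge 0$ summing to $1$, one extracts a scalar identity that forces $\rho \le k - 1 \le d$. The equality case $s(C)=d$ then needs $k = d+1$ with the $a_i$ in ``general position'' and the supporting points of $C$ and $-C-c$ coinciding in a way that pins down $C$ as a $d$-simplex; conversely for $C = T^d$ one checks via Corollary \ref{cor:optCond} (each facet of $C$ is touched by the reflected-and-scaled simplex $dC$ after centering at the centroid) that $s(T^d) = d$.

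The main obstacle is the bookkeeping in the upper-bound argument: correctly tracking how the translation $c$ and the reflection interact with the support conditions, so that the same family $\{a_i\}$ serves as outer normals for both $C$ at some points and $-C-c$ at the points $p_i$, and then massaging $\sum_i \lambda_i a_i^T p_i = \sum_i \lambda_i = 1$ against the inequalities coming from $-C-c \subseteq \rho C$ to land exactly on the factor $k-1$. I expect the cleanest route is to evaluate the support function $h_C$ in the directions $a_i$: from $P \subseteq C$ one gets $h_C(a_i) \ge 1$ while $a_i^T p_i = 1$, and from the reflected inclusion a bound $h_C(a_i) \le $ something like $1 + (\text{contribution of }c)/\rho$; combined with $0 = \sum \lambda_i a_i$ the center terms cancel and the simplex count $k \le d+1$ gives the bound. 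The equality discussion is then a matter of checking that none of these inequalities can be slack, which forces affine independence of the $a_i$ and rigidity of $C$.

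\medskip

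\emph{Remark on alternatives.} One could instead invoke the classical fact (Minkowski, Radon) that $s(C) \le d$ with equality iff $C$ is a simplex as a black box, but the point of stating it here as a corollary is precisely to showcase Theorem \ref{theo:optCond}; so I would keep the self-contained argument above, and in particular isolate the sharpness characterization, since the paper explicitly says it will be reused in the proof of Theorem \ref{theo:asymmIneq}.
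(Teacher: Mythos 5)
Your route for the upper bound is genuinely different from the paper's and can be made to work, but as written it stops exactly at the decisive step. The paper does not apply Theorem \ref{theo:optCond} to the pair $(-C, s(C)C)$ at all: it first uses Lemma \ref{lem:0CoreSets} (Helly) to produce a $d$-simplex $S\subseteq C$ with $s(C)=R(-C,C)=R(-S,C)\le R(-S,S)=s(S)$, thereby reducing everything to the single computation $s(S)=\dim(S)$ for a simplex; that computation is then done by centering $S$ at the origin, observing that each vertex $-x_j$ of $-S$ is the centroid of the facet $d\cdot\conv\{x_i: i\ne j\}$ of $dS$, and invoking Corollary \ref{cor:optCond}. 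Your plan instead works with the optimality condition for $-C$ in $\rho C$ directly. That is viable, but the phrase ``one extracts a scalar identity that forces $\rho\le k-1$'' hides the one idea that actually produces the factor $k-1$; the bookkeeping you describe ($\sum_i\lambda_i a_i^Tp_i=1$ plus cancellation of the center terms) does not by itself yield it. Concretely: with $P=\frac{1}{\rho}(-C-c)$ optimally contained in $C$, the support conditions give $\max_{x\in C}a_i^Tx=1$ and $\max_{x\in C}(-a_i)^Tx=\rho+a_i^Tc$, so weighting by the $\lambda_i$ and using $\sum_i\lambda_i a_i=0$ gives $\sum_i\lambda_i\max_{x\in C}(-a_i)^Tx=\rho$ --- but this is an identity, and you still need an \emph{upper} bound on its left-hand side. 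The missing step is sublinearity of the support function applied to $-a_i=\sum_{j\ne i}(\lambda_j/\lambda_i)a_j$, which yields $\max_{x\in C}(-a_i)^Tx\le(1-\lambda_i)/\lambda_i$ and hence $\rho\le\sum_i(1-\lambda_i)=k-1\le d$. With that inserted, your argument closes; what the paper's reduction-to-simplices buys is that it avoids this computation entirely at the cost of one more appeal to Helly.

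Two smaller points. First, the corollary only asserts equality \emph{if} $C$ is $0$-symmetric, respectively a $d$-simplex; you are additionally trying to prove the converses. For the lower bound, note that $-C=c+C$ only makes $C$ symmetric about $-c/2$, not about $0$; for the upper bound, the converse ($s(C)=d$ implies simplex) is genuinely hard --- it is the Gr\"unbaum result cited in the remark right after the corollary --- and your ``rigidity'' sketch is far from a proof. Fortunately neither converse is needed here or in Theorem \ref{theo:asymmIneq}, which only uses $s(F)=k$ for $k$-simplices $F$. Second, your verification of $s(T^d)=d$ via Corollary \ref{cor:optCond} coincides with the paper's.
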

\begin{proof}
Clearly, the Minkowski asymmetry is bounded from below by 1 and $s(C)=1$ if $C=-C$. For the upper bound we suppose (without loss of generality) that $C$ is full-dimensional. 
 Then Lemma \ref{lem:0CoreSets} yields a $d$-simplex $S \subseteq C$ such that $s(C)= R(-C,C)=  R(-S, C) \le R(-S, S) = s(S)$. Thus, it suffices to show $s(S) = \dim(S)$ for every simplex $S$.
 Suppose $S= \conv\{x_1,\dots, x_{d+1}\} \subseteq \R^d$ is a $d$-simplex, without loss of generality such that $\sum_{i=1}^{d+1} x_i = 0$. 
 For all $i=1, \dots, d+1$, the center of the facet 
 $F_j = d \cdot \conv \{x_i, i \neq j\}$ of $dS$ is $c_j=  \sum_{i \neq j} x_i = - x_j$. Hence $-S \subseteq dS$ and $-S$ touches every facet of $dS$, showing the optimality of the containment by Corollary \ref{cor:optCond}.
\end{proof}

\medskip
{\bf Remark:} In \cite{gruenbaum-63} also the `only if' direction for the sharpness of the bounds in Corollary \ref{cor:asymm} is shown.

\bigskip

Finally, note that Lemma \ref{lem:0CoreSets} can also be seen as a result bounding the combinatorial dimension of Problem \ref{prb:contHomo} interpreted as a Generalized Linear Program (GLP). As it is not our main focus here, we simply mention the connection and refer to \cite{GLP2, GLP} for a treatise on GLPs and to \cite{amenta-thesis} for their relation to Helly-type theorems.

\section{Radii Identities and Small Core-Sets}

\subsection{Identities between Different Radii}

In this section, we show the identity of the core-radii from Definition \ref{defi:coreSetsRadii} to
two series of intersection- and cylinder/projection-radii in convex geometry,
similar to the ones defined in \cite{henk92} and to the more often considered
ones in \cite{gritzmannKlee-92} and \cite{pukhov-80}.  This identity will help us to use a set of known geometric inequalities on these radii to obtain bounds on core-set sizes.

\begin{defi}[Intersection- and Cylinder-Radii] \label{def:henk-radii}%
 For $P\in\CC^d$, $C\in \CC_0^d$ and $k \in \{1,\dots, d\}$, let
 \begin{align} \label{eq:cut}
   R^{\sigma}_k(P,C):= \max \{ R(P \cap E, C) : E \in \mathcal{A}^d_k\} \\
   \intertext{and}  
   \label{eq:cylinder}
   R^{\pi}_k(P,C):= \max \{ R(P , C+F) : F \in \mathcal{L}^d_{d-k} \}
 \end{align}
 Notice, that, as $C+F$ is unbounded, $R(P, C+F)$ is a slight abuse of notation. It follows from Blaschke's Selection Theorem \cite{schneider}, that the maxima in \eqref{eq:cut} and \eqref{eq:cylinder} exist.
\end{defi}

\begin{rem}[Cylinder-radii in Euclidean spaces]
When dealing with the Euclidean unit ball $\B^d$, the observation that, for $F \in \mathcal{L}^d_{d-k}$, $R(P, \B^d+F)=R(P|F^\perp, \B^d)$ shows that the cylinder-radii can be interpreted as projection-radii, i.e.
$$ R^{\pi}_k(P,\B^d)= \max \{R(P|E, \B^d): E\in \CL_k^d\} \ .$$
\end{rem}

\bigskip 

The following theorem states the identity of these three series of radii. To the best of our knowledge, even the equality between the intersection- and projection-radii in the Euclidean case has not been shown before. 

\begin{theo}[Identity of Intersection-, Cylinder- and Core-Radii] \label{theo:differentRadii}%
Let $P\in\CC^d$, $C\in \CC_0^d$ and $k \in \{1, \dots, d\}$. Then, 
  $$ R_k(P, C) = R^{\sigma}_k(P,C) = R^{\pi}_k(P,C). $$
\end{theo}

\begin{proof}
We show $R_k(P,C) \le R_k^\sigma(P,C)\le R_k^\pi(P,C)\leq R_k(P,C)$.

 First, $R_k(P,C) \le R_k^\sigma(P,C)$: By definition of the core-radii, there exists $S \subseteq P$ with $|S|= k+1$ and 
 $R(S,C)=R_k(P, C)$. Since $\dim(\aff(S)) \le k$, one obtains $$R_k(P, C)= R(S, C) \leq R( P \cap \aff(S), C)\leq R_k^{\sigma}(P,C).$$

Now, $R_k^\sigma(P,C)\le R_k^\pi(P,C) $: Let $E \in \CL_{k}^d$ such that $R_k^\sigma(P,C)= R(P\cap E,C)$. As $\dim(P\cap E)\leq k$, Lemma \ref{lem:0CoreSets} and Theorem \ref{theo:optCond} show that, for $m \leq k+1$, there are points $p_1,\dots, p_{m}\in P \cap E$ and hyperplanes $H^=_{(a_1, 1)}, \dots, H^=_{(a_m, 1)}$ such that $H^=_{(a_i, 1)}$ supports $C$ in $p_i$ and $0 \in \conv \{a_1,\dots, a_m\}$. As $0 \in \conv\{a_1,\dots, a_m\}$, we get that $\dim \{a_1,\dots, a_m\}^\perp\geq d-k$ and we may choose $F \in \CL_{d-k}^k$ such that $F \subseteq \{a_1,\dots, a_m\}^\perp$. Again by Theorem \ref{theo:optCond}, if follows that
$$ R_k^\sigma(P,C)= R(P\cap E,C) = R(P \cap E, C + F) \leq R(P, C+F) \leq R_k^\pi(P,C). $$

 Finally, $R^{\pi}_k(P,C) \leq R_k(P, C)$: Let $F \in \CL_{d-k}^d$ such that $R_k^\pi(P,C)= R(P,C+F)$ 
 and suppose without loss of generality that $P$ is optimally contained in $C+F$ 
 (i.e. the optimal radius and center are $\rho^*=1$ and $c^*=0$, respectively). 
 Then it follows from the statement for unbounded containers in Theorem \ref{theo:optCond} that there exist $m \le d+1$ points $p_1,\dots, p_m \in P$ 
 and hyperplanes $H^=_{(a_i, 1)}$, $i=1, \dots, m$ 
 such that $H^=_{(a_i, 1)}$ supports $C+F$ in $p_i$ and $0 \in \conv \{a_1,\dots, a_m\}$. 
 Since every direction in $F$ is an unbounded direction in $C+F$, one obtains $a_i \in F^\perp$ for all $i=1,\dots, m$. 
 Now, by Caratheodory's Theorem, there exists a subset $I \subseteq \{1,\dots , m\}$ with $|I| \le \dim(F^\perp) + 1 = k+1$ 
 such that $0\in \conv\{a_i: i\in I\}$. Applying again Theorem \ref{theo:optCond}, $$R_k^\pi (P,C) = R(P,C+F) 
 = R(\conv\{p_i: i\in I\}, C+F) \leq R(\conv\{p_i : i\in I\}, C) \leq R_k(P,C).$$
\end{proof}

\subsection{Dimension Independence for Two Special Container Classes} \label{subs:positive}
 The most evident (non-trivial) example for a restricted class of containers allowing small core-sets may be parallelotopes. 
 E.g. in \cite[\S{25}]{excursionsIntoCombGeom}, the following Proposition is shown:

 \begin{prop}[Core-radii for parallelotopes]
   The identity 
   $$R_1(P, C) = R(P,C)$$
   holds true for all $P \in \CC^d$ if and only if $C\in \CC_0^d$ is a parallelotope. 
 \end{prop}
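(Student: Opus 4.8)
The plan is to prove both directions of this characterization of parallelotopes via the core-radius identity $R_1(P,C) = R(P,C)$.

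For the "if" direction, suppose $C$ is a parallelotope. Then $C$ is the intersection of $d$ pairs of parallel hyperplanes, so $C = \bigcap_{i=1}^d H^{\le}_{(a_i,1)} \cap H^{\le}_{(-a_i,1)}$ for suitable $a_i$ (after centering). The key observation is that the width of $P$ in direction $a_i$ — that is, $\max_{p\in P} a_i^T p - \min_{p \in P} a_i^T p$ — is attained by a pair of points $p_i^+, p_i^-$ in $P$, and this width alone determines how much one must dilate the $i$-th slab of $C$ to contain $P$. Since $C$'s defining slabs are independent (their normals can be treated separately), $R(P,C) = \max_i R(\{p_i^+, p_i^-\}, C)$, and the two points $p_i^+, p_i^-$ realizing the maximum form a set of size $2$, giving $R_1(P,C) = R(P,C)$. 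Alternatively, one can argue via Theorem \ref{theo:differentRadii}: $R_1(P,C) = R_1^\pi(P,C) = \max\{R(P, C+F): F \in \CL^d_{d-1}\}$, and for a parallelotope $C$, choosing $F$ to be the span of all but one edge direction makes $C+F$ a single slab whose dilation factor is governed by a $1$-dimensional projection, hence by two points.

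For the "only if" direction, suppose $R_1(P,C) = R(P,C)$ holds for \emph{all} $P \in \CC^d$; we must show $C$ is a parallelotope. The natural strategy is contrapositive: assume $C$ is not a parallelotope and construct a witness $P$ with $R_1(P,C) < R(P,C)$. A clean choice is $P = -C$ (or a simplex inside $C$, invoking the machinery around Corollary \ref{cor:asymm}): if $C$ is not a parallelotope, then using the optimality condition Theorem \ref{theo:optCond}, the contact structure of an optimal containment of $-C$ in a dilate of $C$ requires strictly more than $2$ contact points — because two antipodal contact normals $a_1 = -a_2$ with $0 \in \conv\{a_1,a_2\}$ would force $C$ to be a slab in that direction for the relevant faces, and having this happen in every optimal direction pins $C$ down to a parallelotope. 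More concretely, if some $2$-element subset $\{p,q\}\subseteq -C$ already achieved $R(-C,C)$, the optimality condition for $\conv\{p,q\} \subseteq R(-C,C)\,C$ would supply two supporting hyperplanes with opposite normals, i.e. a pair of parallel supporting hyperplanes of $C$ touching at $p/R$ and $q/R$; iterating this observation over a spanning set of directions forces $C$ between $d$ pairs of parallel hyperplanes with nothing else, hence a parallelotope.

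The main obstacle I expect is the "only if" direction: turning "not a parallelotope" into a concrete $P$ with a strict gap, and in particular handling the case distinction cleanly. One must be careful that $C$ could fail to be a parallelotope in subtle ways (e.g. a prism over a polygon, or a smooth body), so the cleanest route is probably to cite the result of \cite{excursionsIntoCombGeom} directly as the excerpt does, and only sketch why $R_1(P,C) = R(P,C)$ for all $P$ characterizes parallelotopes among $\CC^d_0$: the equality for all $P$ is equivalent (by Theorem \ref{theo:differentRadii}) to $R^\pi_1(P,C) = R(P,C)$ for all $P$, i.e. every containment problem reduces to a one-dimensional one, which is exactly the statement that $C$'s facet structure consists of $d$ independent pairs of parallel slabs. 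Making the reduction-to-1D statement precise, and verifying it forces parallelism (not just that $C$ has $2d$ facets), is where the real work lies.
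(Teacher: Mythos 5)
First, note that the paper does not prove this proposition at all: it is stated as a known result and attributed to \cite[\S 25]{excursionsIntoCombGeom}, so there is no in-paper argument to compare against. Your ``if'' direction is essentially correct and is the standard argument: writing the (translated) parallelotope as $\bigcap_{i=1}^d\{x: |a_i^Tx|\le 1\}$ with linearly independent $a_i$, the map $c\mapsto(a_1^Tc,\dots,a_d^Tc)$ is a bijection, so the $d$ slab constraints decouple and $R(P,C)=\max_i \tfrac12\bigl(\max_{p\in P}a_i^Tp-\min_{p\in P}a_i^Tp\bigr)$, which is realized by a single pair of points. That part is fine, modulo writing out the decoupling explicitly.

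The ``only if'' direction, however, has a genuine gap, and your proposed witness does not work. Taking $P=-C$ is useless precisely in the hard case: if $C$ is $0$-symmetric (a ball, a cross-polytope, a cylinder, \dots) then $-C=C$ and one checks directly that $R_1(C,C)=R(C,C)=1$ (a boundary pair $\{p,-p\}$ already forces $\rho\ge 1$), so no contradiction arises even though $C$ is far from a parallelotope. Your fallback argument --- that a two-point optimal subset yields two antipodal contact normals via Theorem \ref{theo:optCond}, and ``iterating over a spanning set of directions'' pins $C$ down to a parallelotope --- also fails as stated: every convex body admits pairs of parallel supporting hyperplanes touching at a diametral pair (take $C=\B^d$), so the existence of such contact structures for many $P$ says nothing about facet parallelism. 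What is actually needed is the equivalence, via the Helly-type duality of Lemma \ref{lem:0CoreSets}, between ``$R_1(P,C)=R(P,C)$ for all $P$'' and ``every family of pairwise intersecting translates of $C$ has a common point'', together with the nontrivial classical theorem that the latter characterizes parallelotopes; this is exactly the content of the cited result in \cite{excursionsIntoCombGeom}, and your sketch does not supply it. So either cite that theorem outright (as the paper does) or accept that the ``only if'' direction still requires its full proof; a correctly chosen witness would have to be something like a simplex or another set exhibiting Helly number $>2$ for translates of $C$, not $-C$.
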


 In terms of core-sets, this means that there is a 0-core-set of size two for all $P \in \CC^d$, if $C$ is a parallelotope 
 and that these are the only containers with this property.

\begin{rem}\label{rem:centerConformBoxes}%
 Even though there always exist center-conform 0-core-sets of size two, if $C$ is a
 parallelotope, one may need $S$ to contain at least $d+1$ points in order to actually
 identify which of the centers $c_{S}$ of $S$ may be a center of $P$.
 Let, e.g., $\tau \in [-1,1]$, $P=\conv \{(\tau \pm 1)u_{1},\dots,(\tau \pm
 1)u_{d-1},\pm u_{d}\}$, and $C=[-1,1]^{d}$. Then $S= \{-u_{d}, u_{d}\}$ is a 0-core-set
 of $P$ with respect to $C$ and every point in $[-1,1]^{d-1} \times \{0\}$ a
 possible center $c_{S}$ of $S$ (indicated by the red arrow in Figure \ref{fig:boxCenters}).
 As soon as one decides for one such $c_{S}$ to form a center-conform $\varepsilon$-core-set
 $S$ without further knowledge of $P$ (in our case of $\tau$), there may be a point $p_{3}$ in $P
 \setminus (c_{S} + (1+\varepsilon)R(S,C)C)$ for any $\varepsilon < 1$. 

\begin{figure}[h]
\centering
\includegraphics[width=0.45\textwidth]{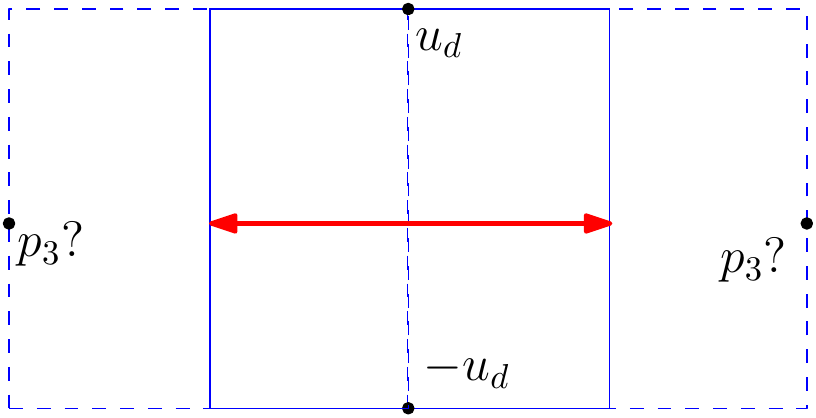}
\caption{The situation of the example of Remark \ref{rem:centerConformBoxes}. \label{fig:boxCenters} }
\end{figure}
\end{rem}
\medskip

 Surely, a more important restricted class of containers is the class of ellipsoids. 
 In \cite{henk92} geometric inequalities are derived which relate the radii of Definition \ref{def:henk-radii} within each series. Using Theorem \ref{theo:differentRadii}, these inequalities can be presented in a unified way in terms of core-radii:

 \begin{prop}[Henk's Inequality] \label{prop:henk}%
   Let $P\in \CC^d$ and  $k,l \in \N$ where $l\leq k \leq d$. 
   Then
   \begin{equation} \label{eq:henk} \frac{R_k(P, \B^d)}{R_l(P, \B^d)} \leq  \sqrt{\frac{k(l+1)}{l(k+1)}} \end{equation}
   with equality if $P=T^d$.
  
 \end{prop}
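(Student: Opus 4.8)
The plan is to prove Henk's inequality \eqref{eq:henk} by working with the intersection-radii $R^\sigma_k(P,\B^d)$, which Theorem~\ref{theo:differentRadii} identifies with the core-radii. First I would reduce to the case where the maximizing affine section can be taken through a common point: since $R_l$ is attained on some $l$-dimensional section and the $k$-dimensional section realizing $R_k$ can be translated, the key geometric input is a comparison of in- and circum-type quantities. Concretely, I would fix an optimal $k$-section $E \in \CA^d_k$ with $R(P \cap E, \B^d) = R_k(P,\B^d)$ and center $c$, so that $P \cap E \subseteq c + R_k \B^d$ while, by the Euclidean optimality condition (Corollary~\ref{cor:optCondEuclidean}), $0 \in \conv(\{p_1,\dots,p_m\})$ for touching points $p_i \in (P\cap E)\cap \bd(c + R_k\B^d)$ with $m \le k+1$. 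These $m$ points lie in an $(m-1)$-dimensional flat and form a configuration whose circumradius is $R_k$ with the circumcenter in the convex hull.

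The core of the argument is then a pure simplex/volume estimate: among all point configurations $\{p_1,\dots,p_m\}$ in $\R^k$ with circumradius $1$ and circumcenter contained in their convex hull, the regular simplex $T^k$ minimizes the circumradius of its $l$-dimensional subconfigurations (equivalently: the ratio of circumradii of the regular $k$-simplex to that of its $l$-faces is the extreme value). So the plan is: (1) show $R_k(P,\B^d)$ is governed by such a finite touching configuration; (2) among those, pass to the sub-configuration of $l+1$ of the points (or an $l$-dimensional section) that witnesses $R_l$, giving $R_l(P,\B^d) \ge R^\sigma_l(\conv\{p_i\},\B^d)$; (3) prove the inequality $R^\sigma_l(\Delta,\B^d)/R(\Delta,\B^d) \ge \sqrt{l(k+1)/(k(l+1))}$ for every configuration $\Delta$ of $k+1$ points with circumcenter in its hull, with equality for the regular simplex. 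Step (3) is the classical fact that the regular $k$-simplex of circumradius $1$ has its $l$-faces of circumradius $\sqrt{(l(k+1))/(k(l+1))}$, combined with an extremality argument.

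The proof of step (3) I would organize as follows. For the regular simplex $T^k$ with vertices $v_0,\dots,v_k$ of unit circumradius centered at $0$, one has $v_i^T v_j = -1/k$ for $i\ne j$; the circumcenter of an $l$-face $\conv\{v_{i_0},\dots,v_{i_l}\}$ is its barycenter $\bar v = \frac{1}{l+1}\sum_{j} v_{i_j}$, and $\|v_{i_0} - \bar v\|_2^2 = 1 - \|\bar v\|_2^2$ where $\|\bar v\|_2^2 = \frac{1}{(l+1)^2}(l+1)(1 - l/k) = \frac{1}{l+1}\cdot\frac{k-l}{k}$, yielding circumradius $\sqrt{\frac{l(k+1)}{k(l+1)}}$ after simplification, and this gives equality in \eqref{eq:henk} for $P = T^d$. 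For the inequality, I would invoke that among all $(k+1)$-point configurations with prescribed circumradius and circumcenter in the convex hull, the regular simplex is the one whose $l$-subconfigurations are "smallest"; this can be argued via Henk's original inequality for the intersection-radii (Proposition stated as coming from \cite{henk92}) — indeed, since we have already identified $R_k = R^\sigma_k = R^\pi_k$, the cleanest route is to simply cite that the inequality between the $R^\sigma$-series is exactly what \cite{henk92} establishes, and Theorem~\ref{theo:differentRadii} transports it to the core-radii verbatim.

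The main obstacle I anticipate is not the identification (that is Theorem~\ref{theo:differentRadii}, already available) but rather whether one wants a self-contained proof of the extremal inequality or is content to cite \cite{henk92}. If self-contained, the hard step is the extremality of the regular simplex among configurations with circumcenter in the hull: this requires either a compactness-plus-symmetrization argument or an eigenvalue/rearrangement estimate on the Gram matrix, and some care is needed because the circumcenter constraint ($0 \in \conv$) is what makes the bound nontrivial and prevents degenerate configurations from doing better. Given the phrasing of the excerpt ("these inequalities are derived in \cite{henk92} ... can be presented in a unified way"), I expect the intended proof is short: combine \cite{henk92}'s inequality for $R^\sigma$ with Theorem~\ref{theo:differentRadii}, and verify the equality case $P = T^d$ by the direct Gram-matrix computation sketched above.
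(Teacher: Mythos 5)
Your proposal lands on exactly the paper's route: the paper gives no independent proof of Proposition~\ref{prop:henk}, but obtains it by citing the inequalities of \cite{henk92} for the intersection-/cylinder-radii of Definition~\ref{def:henk-radii} and transporting them to the core-radii via Theorem~\ref{theo:differentRadii}, with the equality case for $P=T^d$ being the standard circumradius computation for $l$-faces of the regular simplex that you carry out correctly. The additional self-contained extremality argument you sketch (regular-simplex optimality among configurations with circumcenter in the convex hull) is not needed and is not attempted in the paper.
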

\medskip
 {\bf Remark:} Because of the affine invariance of \eqref{eq:henk} one may replace $\B^d$ by any $d$-dimensional ellipsoid.
 
\bigskip

This inequality can now directly be turned into a sharp bound on the size of $\varepsilon$-core-sets for the MEB problem and Theorem \ref{theo:coreSetsMEB} follows:

\begin{proof}[Theorem \ref{theo:coreSetsMEB}]
 Let $\varepsilon > 0, k = \left\lceil \frac{1}{2 \varepsilon+\varepsilon^2 } \right\rceil$, and $S\subseteq P$ such that $R(S,\B^d)=R_k(P,\B^d)$. 
 Then $|S| \leq k+1$ and by Proposition \ref{prop:henk} and Lemma \ref{lem:0CoreSets}:
 $$ {R(P,\B^d)} \le \sqrt{\frac{d(k+1)}{k(d+1)}} \cdot R(S,\B^d)$$ 
 where $k$ is chosen such that $\sqrt{\frac{d(k+1)}{k(d+1)}} \le 1+ \varepsilon$ independently of $d\in \N$.

 Now, we show the sharpness of the bound: Let $d \in \N$ such that $\frac{d}{d+1} > (1+\varepsilon)^2 \frac{k}{k+1}$ and 
 choose $P= T^d$. Now, for $k < \frac{1}{2\varepsilon + \varepsilon^2}$ if $S' \subseteq P$ consists of no more than $k+1$ 
 points then 
$$  R(P, \B^d) = \sqrt{\frac{d(k+1)}{k(d+1)}}R_k(T^d, \B^d) 
   > (1+\varepsilon) R_k(T^d, \B^d) \geq (1+\varepsilon)R(S', \B^d). $$
 Hence $S'$ is not an $\varepsilon$-core-set of $P$.
\end{proof}

\medskip

{\bf Remark:} Jung's well known inequality \cite{jung-01}, relating the diameter and the outer radius of $P$, can be obtained from Proposition \ref{prop:henk} just by choosing $k=d$ and $l=1$. As Proposition \ref{prop:henk}, it can be turned into a core-set result saying that, for the Euclidean ball in every dimension, a diametral pair of points in $P$ is already a $(\sqrt{2} -1)$-core-set.

\medskip

A very easy and intuitive algorithm to actually find $\varepsilon$-core-sets of a finite set $P$ was first introduced in \cite{bhpi-02}. Roughly speaking, 
it starts with a subset $S\subset P$ of two (good) points and computes (or approximates) the minimum enclosing ball $B_S$ for $S$. 
Whenever a dilatation by $(1+\varepsilon)$ of $B_S$ centred at $c_S$ does not cover the whole set $P$, 
an uncovered point is added to $S$ and the process is 
iterated. The analysis in \cite{bhpi-02} shows that this algorithm produces $\varepsilon$-core-sets of size $O(1/\varepsilon^{2})$, and, by construction, these are even center-conform.

In \cite{BadoiuClarksonOptimal}, the existence of center-conform $\varepsilon$-core-sets of size $1/{\varepsilon}$ and the sharpness of this bound are shown.
Theorem \ref{theo:coreSetsMEB} now complements this result and gives a tight upper bound on the size of (general) core-sets, which is roughly half the center-conform bound.

\section{No Sublinear $\varepsilon$-Core-Sets}

 In this section several geometric inequalities between the core-radii are collected and then used to derive positive and negative results on possible $\varepsilon$-core-set sizes. One should remember that, because of Lemma \ref{lem:0CoreSets}, we already know the existence of 0-core-sets 
 of size $d+1$, i.e. 
 not depending on the size of $P$ (nor $C$) and only linearly depending on $d$.

\subsection{General (non-symmetric) Containers}

 \begin{theo}[Inequality relating core-radii] \label{theo:asymmIneq}%
   Let $P\in \CC^d$, $C\in \CC_0^d$ and $k,l \in \N$ such that $l\leq k \leq d$. Then
   $$\frac{R_k(P, C)}{R_l(P, C)} \leq \frac{k}{l}$$
   with equality if $P = -C = T^d$. 
 \end{theo}

 \begin{proof}
   It suffices to show 
   \begin{equation}
     \frac{R_k(P, C)}{R_{k-1}(P, C)} \le \frac{k}{k-1}.
     \label{eq:generalIneq}
   \end{equation}
   as for $l < k-1$ the claim follows by repeatedly applying \eqref{eq:generalIneq}. 
   Without loss of generality one may assume the existence of a $k$-simplex $S = \conv\{x_1, \dots , x_{k+1}\} \subseteq P$ 
   satisfying $R(S,C)= R_k(P,C)$, as (\ref{eq:generalIneq}) is certainly fulfilled if  ${R_k(P, C)}={R_{k-1}(P, C)}$. 
   Moreover, it can also be supposed that $\sum_{i=1}^{k+1} x_i = 0$ and $R_{k-1}(S, C) = 1$. 
   Now, let $S_j = \conv \{x_i : i\neq j\}$, $j= 1,\dots , k+1$ denote the facets of $S$. 

   Since $\sum_{i=1}^{k+1} x_i =0$, it follows $ -1/k \cdot x_j = 1/k \sum_{i\neq j}  x_i \in \conv\{x_i: i\neq j\} = S_j$ 
   for all $j$ and surely $x_j \in S_i$ for all $i,j$, $i \neq j$. 

Since $R_{k-1}(S, C) = 1$, 
   there exist translation vectors $c_j \in \R^d$ such that $S_j \subseteq c_j + C$ for all $j \in \{1,\dots , k+1\}$ which implies 
   $$ (k-\frac{1}{k}) x_j \in \sum\limits_{i=1}^{k+1} S_i \subset \sum\limits_{i=1}^{k+1} c_i + (k+1) C$$ for all $j$
   and thus $R(S,C) \le (k+1) / (k-\frac{1}{k})$.
   However, since $R_{k-1}(S,C) =1$ we obtain 
   $$R_k(P,C)= R(S,C) \le \frac{k}{k-1} R_{k-1}(S,C) \le \frac{k}{k-1} R_{k-1}(P,C)$$ proving \eqref{eq:generalIneq}.
   
   The sharpness of the inequality for $-P = C = T^d$ follows directly from showing $R_k (T^d, -T^d)= k$ for 
   $k= 1, \dots, d$:
   
   Since every $k$-face $F$ of $T^d$ can be covered by the $k$-face of $-T^d$ parallel to $F$ 
   and since these $k$-faces are regular $k$-simplices, it follows from  Corollary \ref{cor:asymm} 
   that $R(F, -F) = k$ and, thus, $R_k(T^d, -T^d) \le k$ for all $k\in \{1,\dots, d\}$. 

Finally, for every face $F$ of $-T^d$, it is true that $-T^d | \aff(F) = F$. Thus, if $S_k \subseteq T^d$ is a $k$-face of $T^d$ and $S_k \subseteq c+ \rho (-T^d)$ for some $c\in \R^d$ and $\rho \ge 0$, then $S_k | \aff(c + \rho (-S_k)) \subseteq (c+\rho (-S_k))$. However, $\aff(c + \rho (-S_k))$ is parallel to $S_k$, and therefore the above projection is just a translation, which means there exists $c'\in \R^d$ such that 
$S_k \subseteq c'+ \rho (-S_k)$. Using Corollary \ref{cor:asymm} again, 
   it follows that $\rho \ge k$. 
\end{proof}

 \begin{cor}[No sublinear core-sets for general containers] \label{cor:coreSetsAssym}%
   For every $P\in \CC^d, C\in \CC_0^d$ and $\varepsilon \ge 0$, there exists an $\varepsilon$-core-set of $P$ of size at most 
   $\left \lceil \frac{d}{1+\varepsilon}\right\rceil+1$ and for $P = -C = T^d$ no smaller subset of $P$ will suffice.
 \end{cor}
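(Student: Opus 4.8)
The plan is to obtain both halves of the statement directly from Theorem \ref{theo:asymmIneq}, using Lemma \ref{lem:0CoreSets} to supply the identity $R_d(P,C) = R(P,C)$; once that inequality is in hand, the corollary is essentially bookkeeping with ceilings and with the ``size $\le k+1$'' convention in the definition of the core-radii.

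For the positive part I would put $k := \left\lceil \frac{d}{1+\varepsilon}\right\rceil$, so that $k \ge \frac{d}{1+\varepsilon}$, i.e.\ $\frac{d}{k} \le 1+\varepsilon$. By Definition \ref{defi:coreSetsRadii} there is a set $S \subseteq P$ with $|S| \le k+1$ and $R(S,C) = R_k(P,C)$. Applying Theorem \ref{theo:asymmIneq} with the pair of indices $d$ and $k$, together with $R_d(P,C) = R(P,C)$ from Lemma \ref{lem:0CoreSets}, gives
\[
 R(P,C) = R_d(P,C) \le \frac{d}{k}\, R_k(P,C) \le (1+\varepsilon)\, R_k(P,C) = (1+\varepsilon)\, R(S,C),
\]
so $S$ is an $\varepsilon$-core-set of the claimed size $k+1 = \left\lceil \frac{d}{1+\varepsilon}\right\rceil + 1$; since $d\ge 1$ one always has $k\ge 1$, so no degenerate case occurs here.

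For the negative part I would specialise to $P = -C = T^d$ and reuse the computation already carried out inside the proof of Theorem \ref{theo:asymmIneq}, namely $R_k(T^d, -T^d) = k$ for $k = 1, \dots, d$, and in particular $R(T^d,-T^d) = R_d(T^d,-T^d) = d$. Given any $S' \subseteq T^d$ with $|S'| \le k := \left\lceil \frac{d}{1+\varepsilon}\right\rceil$, one has $|S'| \le (k-1)+1$ and hence $R(S',C) \le R_{k-1}(T^d,-T^d) = k-1$ (reading $R_0$ as the radius of a single point, which is $0$, in the case $k=1$). Since by definition of the ceiling $k-1 < \frac{d}{1+\varepsilon}$, i.e.\ $(1+\varepsilon)(k-1) < d$, we get
\[
 R(P,C) = d > (1+\varepsilon)(k-1) \ge (1+\varepsilon)\, R(S',C),
\]
so $S'$ fails to be an $\varepsilon$-core-set; thus every $\varepsilon$-core-set of $T^d$ with respect to $-T^d$ needs at least $\left\lceil \frac{d}{1+\varepsilon}\right\rceil + 1$ points.

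The only genuinely delicate point, and the closest thing to an obstacle, is aligning the off-by-one between the cardinality bound $|S|\le k+1$ in the definition of $R_k$ and the ceiling arithmetic, so that the strict and non-strict inequalities land on the correct sides; beyond that, and the trivial small-$k$ case, no new geometric input is required past Theorem \ref{theo:asymmIneq} and Lemma \ref{lem:0CoreSets}.
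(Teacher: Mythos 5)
Your proof is correct and follows essentially the same route as the paper: the positive part is verbatim the paper's argument (choose $k=\left\lceil \frac{d}{1+\varepsilon}\right\rceil$, apply Theorem \ref{theo:asymmIneq} together with $R_d(P,C)=R(P,C)$ from Lemma \ref{lem:0CoreSets}), and the negative part uses the same sharpness fact $R_k(T^d,-T^d)=k$, merely reindexed so that you bound $R(S',C)$ by $R_{k-1}$ rather than quantifying over all $k<\frac{d}{1+\varepsilon}$ as the paper does. The ceiling bookkeeping and the small-$k$ edge case are handled correctly.
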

 \begin{proof}
   The case $\varepsilon = 0$ equates to Lemma \ref{lem:0CoreSets}.
   So, let $\varepsilon > 0$ and $k = \left\lceil \frac{d}{1+\varepsilon} \right\rceil$. If 
   $S \subseteq P$ such that $R(S,C)=R_k(P,C)$ then $|S| \le k+1$ and by Theorem \ref{theo:asymmIneq}:
   $$ R(P,C)= R_d(P,C) \le \frac{d}{k} R_k(P,C)= \frac{d}{k}R(S,C).$$ 
   
   By the choice of $k$, ${d}/{k} \le 1+ \varepsilon$.
   
In order to show the sharpness of the bound, choose $P= T^d$ and $C= -T^d$. Now, for $k < \frac{d}{1+\varepsilon}$, if $S' \subseteq P$ consists of no more than $k+1$  points, then it follows from the sharpness condition in Theorem \ref{theo:asymmIneq}, that 

$$  R(T^d, -T^d) = \frac{d}{k}R_k(T^d, -T^d) 
   > (1+\varepsilon) R_k(T^d, -T^d) \geq (1+\varepsilon)R(S', -T^d). $$
Hence $S'$ is no $\varepsilon$-core-set of $P$.
\end{proof}
 
 \medskip
 
 \textbf{Remarks: \\} 
(i) Note that, by Lemma \ref{lem:0CoreSets}, the minimal size of a 0-core-set depends linearly on $d$ and Corollary \ref{cor:coreSetsAssym} now shows that allowing $\varepsilon >0$ does not improve this situation. Thus, Corollary \ref{cor:coreSetsAssym} already proves Theorem \ref{theo:main} for general containers. \\[1ex]

(ii) We would like to mention that, whenever $C$ is a polytope presented as $C = \{x \in \R^d : a_k^Tx \le 1, ~ k=1,\dots , m\}$ and $P=\conv\{p_1,\dots, p_n\}$, Problem \ref{prb:contHomo} can be rewritten as a Linear Program \cite{brandenbergRoth-10, gritzmannKlee-93}, with the help of which a $0$-core-set of $P$ of at most $d+1$ points can be computed in polynomial time.

\begin{rem}[Center-conformity]
Choosing $P=-C=T^d$, every subset $S$ of $d$ vertices of $P$ yields $R(S,C) = d-1$ with a unique center $c_S$. But to cover $P$ by $c_S + \rho C$, we need $\rho \ge \frac{2d}{d-1}R(S,C)$. So, for  $\varepsilon \in (0,1)$, a center-conform $\varepsilon$-core-set may need to be of size $d+1$.
\end{rem}

Moreover, as much as we understood it, \cite[Theorem 5]{panigrahy} asserts (in particular) that for every $\varepsilon > 0$ there is a subset $S \subseteq T^d$ of size $O(1/\varepsilon^2)$ such that every point in $T^d$ has Euclidean distance at most $\varepsilon$ to $c_S + R(S,-T^d)(-T^d)$. Again, taking any subset $S \subseteq T^d$ of $d$ vertices and the fact that the distance of the remaining vertex to $c_S + (d-1)(-T^d)$ is strictly greater than ${1}/{\sqrt{2}}$, shows that this theorem cannot be true for $\varepsilon < {1}/{\sqrt{2}}$.

\subsection{Symmetric Containers/Normed Spaces}
As mentioned in the introduction, every 0-symmetric container $C \in \CC^d_0$ induces a norm $\|\cdot\|_C$ and vice versa. We will always talk about symmetric containers here, but one may as well reformulate all results in terms of Minkowski spaces. 

The results in section \ref{subs:positive} may motivate the hope that symmetry of the container is the key for positive results on dimension-independence.  

Indeed, in \cite{bohnenblust-38}, Bohnenblust proved an equivalent to Jung's Inequality (see the remark after the proof of Theorem \ref{theo:coreSetsMEB}) 
 for general normed spaces. Taking into account the Minkowski asymmetry $s(C)$ of a possibly asymmetric container $C$, 
 a slightly generalized result and a simplified proof are derived in \cite[Lemma 2]{br-07}; in terms of core-radii, it reads as follows:
 
\begin{prop}[Generalized Bohnenblust] \label{prop:bohnenblust}%
Let $P\in \CC^d$, $C\in\CC^d_0$. Then
$$ \frac{R(P, C)}{R_1(P, C)}  \le \frac{(1+s(C))d}{d+1}$$
 with equality, if $P= T^d = -C$ or $P=T^d$ and $C=T^d - T^d$.
\end{prop}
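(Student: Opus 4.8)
The plan is to bound $R(P,C)$ by $R_1(P,C)$ by combining the core-radius inequality of Theorem~\ref{theo:asymmIneq} (which controls $R_k/R_l$ purely combinatorially) with the optimality machinery of Theorem~\ref{theo:optCond} and Corollary~\ref{cor:asymm}, but since a naive chaining would only yield $d$ rather than $(1+s(C))d/(d+1)$, the crux is to run the argument at the level of a single $d$-simplex and keep track of the asymmetry constant. First I would invoke Lemma~\ref{lem:0CoreSets} to replace $P$ by a $d$-simplex $S=\conv\{x_1,\dots,x_{d+1}\}\subseteq P$ with $R(S,C)=R_d(P,C)=R(P,C)$; since the asserted inequality is trivial when $R(P,C)=R_1(P,C)$, I may assume $S$ is genuinely $d$-dimensional. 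Normalizing, I would translate so that $\sum_i x_i=0$ and scale so that $R_1(S,C)=1$, i.e.\ every edge $\conv\{x_i,x_j\}$ fits in a translate of $C$.

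The main computation then parallels the proof of Theorem~\ref{theo:asymmIneq}, but exploiting edges rather than facets. From $\sum_i x_i=0$ one has $-x_j=\sum_{i\neq j}x_i$, and I would write this vector as a nonnegative combination of edge-vectors of $S$ (each $x_i-x_\ell$ lying in a translate of $C$) plus vertex terms, tracking carefully how the coefficients assemble: summing the $d$ segments $\conv\{x_\ell,x_i\}$ for a fixed $\ell$ and $i\neq\ell$ gives $\sum_{i\neq\ell}(x_\ell+x_i)/2 = ((d-1)x_\ell - x_\ell)/2+\dots$, so that an appropriate integer multiple of each $x_j$ lands in a Minkowski sum of translates of $C$ and of $-C$ (the $-C$ appearing precisely because some segments must be traversed "backwards", which is where $s(C)=R(-C,C)$ enters). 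Replacing each translate of $-C$ by a translate of $s(C)\,C$ and collecting constants yields $R(S,C)\le \tfrac{(1+s(C))d}{d+1}$ after simplification; alternatively, and more cleanly, I would combine $R_d(S,C)\le d\cdot R_1(S,C)$ from Theorem~\ref{theo:asymmIneq} at the penultimate step with the extra factor $\tfrac{d+1}{(1+s(C))d}\cdot\tfrac{1}{1}$ obtained from the symmetry-aware bookkeeping. Either route, the key inequality is an identity of the shape "$(\text{integer})\,x_j \in \sum_i (\text{translate of } C \text{ or } -C)$" read through $R(-C,C)=s(C)$.

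For sharpness with $P=T^d=-C$: here $s(C)=s(-T^d)=d$ by Corollary~\ref{cor:asymm}, so the bound reads $R(T^d,-T^d)/R_1(T^d,-T^d)\le (1+d)d/(d+1)=d$, and equality holds since $R_d(T^d,-T^d)=d$ and $R_1(T^d,-T^d)=1$ as established inside the proof of Theorem~\ref{theo:asymmIneq} (each edge of $T^d$ is a regular $1$-simplex, covered by the opposite edge of $-T^d$). For $P=T^d$, $C=T^d-T^d$: now $C$ is $0$-symmetric so $s(C)=1$ and the bound is $2d/(d+1)$; one checks $R_1(T^d,T^d-T^d)=1/2$ (an edge of $T^d$, being a segment, needs exactly half the difference body that realizes that edge's direction) while $R(T^d,T^d-T^d)=d/(d+1)$ by direct inspection of the Minkowski difference body containment, giving the ratio $2d/(d+1)$.

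The main obstacle I anticipate is the bookkeeping in the central Minkowski-sum inclusion: getting the exact integer multiplying $x_j$ right, and correctly counting how many of the constituent segments contribute a $+C$ versus a $-C$ translate, so that the asymmetry factor comes out as $1+s(C)$ and the normalization by $d+1$ (number of vertices) rather than $d$ drops out. Everything else — the reduction to a simplex, the normalization, and the two sharpness checks — is routine given Lemma~\ref{lem:0CoreSets}, Theorem~\ref{theo:optCond}, and Corollary~\ref{cor:asymm}. Since the statement is attributed to \cite{bohnenblust-38} and \cite[Lemma 2]{br-07}, I would in fact just cite that proof rather than reconstruct it, and spend the space instead verifying the two equality cases explicitly.
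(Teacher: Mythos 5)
The paper does not actually prove this proposition; it is quoted from \cite{bohnenblust-38} and \cite[Lemma 2]{br-07}, so your fallback plan --- cite the reference and verify the two equality cases --- is exactly what the paper does. Your equality checks are essentially right: $s(-T^d)=R(T^d,-T^d)=d$ turns the bound into $d$, matched by $R_d(T^d,-T^d)/R_1(T^d,-T^d)=d/1$ (Theorem \ref{theo:asymmIneq}); and $R_1(T^d,T^d-T^d)=1/2$, $R(T^d,T^d-T^d)=d/(d+1)$ give the ratio $2d/(d+1)$, though you should still justify optimality of the containment $T^d\subseteq \frac{d}{d+1}(T^d-T^d)$, e.g.\ via Theorem \ref{theo:optCond}.

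The sketched Minkowski-sum argument, however, has a genuine gap exactly at the step you flag. Summing midpoints of the segments $\conv\{x_\ell,x_i\}$ in analogy with the facet computation in Theorem \ref{theo:asymmIneq} does not produce the constant $\frac{(1+s(C))d}{d+1}$: carried out over all edges it yields $\frac{d+1}{2}x_\ell \in \sum_{i<j}c_{ij}+\binom{d+1}{2}C$, i.e.\ only the trivial bound $R(S,C)\le d$ (and summing only the $d$ edges through a fixed $x_\ell$ gives a center depending on $\ell$); the proposed fix of multiplying in an ``extra factor $\frac{d+1}{(1+s(C))d}$'' afterwards is circular. Also, ``each $x_i-x_\ell$ lying in a translate of $C$'' is vacuous for a single vector. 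The correct bookkeeping avoids midpoints entirely: after reducing to a $d$-simplex $S$ via Lemma \ref{lem:0CoreSets}, normalizing $\sum_i x_i=0$ and $R_1(S,C)=1$, each pair $\{x_i,x_j\}$ lies in a common translate of $C$, hence $x_j-x_i\in C+(-C)\subseteq c_0+(1+s(C))C$ for one fixed $c_0$ with $-C\subseteq c_0+s(C)C$. Then
$$x_j=\frac{1}{d+1}\sum_{i\ne j}(x_j-x_i)\in \frac{d}{d+1}c_0+\frac{(1+s(C))d}{d+1}C$$
for every $j$, with the same translate for all $j$, so all of $S$ sits in a single translate of $\frac{(1+s(C))d}{d+1}C$. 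This one-line identity is what your ``integer multiple of $x_j$'' inclusion should be.
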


\medskip

One might hope that, for the class of symmetric containers, Bohnenblust's
Inequality can be generalized in the same way as Henk's Inequality generalizes
Jung's in the Euclidean case (giving a bound on the core-radii ratio as in \eqref{eq:false-gen} at the
end of this section). This inequality would be tight for $P=T^d$ and
$C=T^d-T^d$. However, the remainder of this section will show that the bound on the ratio of core-radii with symmetric containers does not improve the general bound from Theorem \ref{theo:asymmIneq}.

 \begin{lem} \label{lem:coreRadii_TdinTdCap-Td}%
   With $C^d = T^d \cap (-T^d)$,
   {\renewcommand{\arraystretch}{1.5}
     \begin{equation}
       R_k(T^d, C^d) = \left\{
         \begin{array}{cl}
           \frac{d+1}{2}  	& \text{ if } k \le \frac{d+1}{2} \\
           k			& \text{ if } k \ge \frac{d+1}{2} \ .
         \end{array} \right.
       \nonumber
     \end{equation} }
 \end{lem}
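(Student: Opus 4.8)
The quantity $R_k(T^d, C^d)$ is, by Lemma~\ref{lem:0CoreSets} and Theorem~\ref{theo:differentRadii}, the maximal $C^d$-radius of a $k$-dimensional subset of $T^d$, and since a simplex achieves the maximum we may restrict attention to $k$-faces of $T^d$. Write $T^d = \conv\{x_1, \dots, x_{d+1}\}$ with $\sum_i x_i = 0$, so that $C^d = T^d \cap (-T^d)$ is $0$-symmetric. The plan is to compute $R(F, C^d)$ for a $k$-face $F = \conv\{x_i : i \in I\}$, $|I| = k+1$, and then take the maximum over $I$; by symmetry of $T^d$ under the symmetric group all $k$-faces are equivalent, so the maximum equals this common value.

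\textbf{Upper bound.} First I would note $R(F, C^d) \le R(F, T^d)$ trivially since $C^d \subseteq T^d$; but $F$ is a $k$-face of $T^d$, and $T^d|\aff(F')$ for the opposite-type face behaves like the simplex argument in the proof of Theorem~\ref{theo:asymmIneq}, giving $R(F, T^d) = \ldots$ Actually the cleaner route: use $C^d = T^d \cap (-T^d)$, so $R(F, C^d) = \min$ over the two constraints is not right either — rather, $c + \rho C^d \supseteq F$ iff $c + \rho T^d \supseteq F$ and $c - \rho T^d \supseteq F$ simultaneously. So $R(F, C^d)$ is the least $\rho$ admitting a common translate. For the regime $k \ge (d+1)/2$: I would exhibit the translate $c = 0$, $\rho = k$, and check $F \subseteq k\,T^d$ and $F \subseteq -k\,T^d = k(-T^d)$. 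The first holds because $F \subseteq T^d \subseteq kT^d$; the second because $-\tfrac1k x_j \in \conv\{x_i : i \ne j\}$ (the barycentric identity from Theorem~\ref{theo:asymmIneq}'s proof) forces each vertex $x_i$ of $F$ into $k(-T^d)$ when $k$ is large enough relative to the face — this is exactly the $R_k(T^d, -T^d) = k$ computation already carried out, and the extra $T^d$-constraint is slack. Optimality of $\rho = k$ here follows since already $R(F, -T^d) = k$ by that earlier result, and $C^d \subseteq -T^d$ only makes $\rho$ larger; so $R(F,C^d) \ge R(F,-T^d) = k$, hence $=k$. For the regime $k \le (d+1)/2$: the symmetric-container bound should force $\rho = (d+1)/2$, which is $s(C^d) = R(-T^d, T^d \cap (-T^d))$-type data; I expect $s(C^d) = (d+1)/2$ and that a $k$-face with $k$ small is ``as hard to cover'' as the whole space in the relevant subspace, so $R(F, C^d) = s(C^d)\cdot$(something) — more precisely I would show $R(F, C^d) = R(F, T^d) \cdot R(T^d, C^d|_{\ldots})$ fails to factor, and instead directly: $C^d$ restricted to a generic $k$-flat through $0$ with $k \le (d+1)/2$ is a cross-polytope-like body whose asymmetry pulls the radius up to $(d+1)/2$.

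\textbf{The core computation.} The honest way is: for $F = \conv\{x_i : i \in I\}$, the optimal $\rho$ and center $c$ are characterized by Theorem~\ref{theo:optCond} applied to the container $C^d$, whose facet normals are those of $T^d$ together with those of $-T^d$. One writes $C^d = \{x : a_j^T x \le 1,\ j=1,\dots,d+1\} \cap \{x : -a_j^T x \le 1,\ j = 1,\dots,d+1\}$ where $a_j$ are the (suitably normalized) outer normals of $T^d$. Then $F \subseteq c + \rho C^d$ is a finite system of linear inequalities in $(c,\rho)$, and minimizing $\rho$ is an explicit LP whose optimum, by the symmetry of the configuration, can be solved in closed form. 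I would compute $a_j^T x_i$ explicitly (for a centered regular simplex these take two values depending on whether $i = j$), reduce the LP by symmetry to one with a single center-parameter along the relevant axis, and read off $\rho^* = \max\{k,\ (d+1)/2\}$, which gives both cases at once and meets at the crossover $k = (d+1)/2$.

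\textbf{Main obstacle.} The hard part will be the case $k \le (d+1)/2$: showing the radius does \emph{not} drop below $(d+1)/2$ even for a low-dimensional face. The upper bound $R_k(T^d,C^d) \le (d+1)/2$ is easy (monotonicity in $k$ gives $R_k \le R_d = s(C^d) = (d+1)/2$ once one computes $R(T^d, C^d)$, and this last is $\max\{d, (d+1)/2\} = d$ — wait, that's $d$, not $(d+1)/2$, so monotonicity alone is too weak and in fact $R_d = d$ consistent with the formula's $k \ge (d+1)/2$ branch). So the genuinely new content is the lower bound $R_k(T^d, C^d) \ge (d+1)/2$ for small $k$: one must find a \emph{particular} $k$-face (equivalently, any $k$-face) that cannot be covered by $\rho C^d$ for $\rho < (d+1)/2$. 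I would prove this by duality/Theorem~\ref{theo:optCond}: exhibit, for the chosen $k$-face $F$ and the value $\rho = (d+1)/2$, a set of $\le d+1$ contact points and outer normals of $(d+1)/2\, C^d$ with $0$ in their convex hull — the normals coming in $\pm$ pairs from the two simplices $T^d$, $-T^d$ — thereby certifying optimality. Constructing this certificate, and checking the $0 \in \conv$ condition, is where the real work lies; the arithmetic of regular-simplex normals makes it tractable but tedious.
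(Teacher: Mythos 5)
Your overall strategy is the right one and matches the paper's: fix coordinates with $a_j^Tx_i=1$ for $j\neq i$ and $a_j^Tx_i=-d$ for $j=i$, treat $F\subseteq c+\rho C^d$ as an explicit system of linear inequalities, and certify optimality via Theorem~\ref{theo:optCond} using the fact that the normals of $C^d$ come in $\pm$ pairs. Your lower bound for $k\ge\frac{d+1}{2}$ via container monotonicity ($C^d\subseteq -T^d$ gives $R(F,C^d)\ge R(F,-T^d)=k$, the latter from the sharpness part of Theorem~\ref{theo:asymmIneq}) is a clean shortcut the paper does not use. But as written the argument has concrete errors and, more importantly, omits the part that carries the content of the lemma.

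First, two local errors. You assert $R(F,C^d)\le R(F,T^d)$ ``since $C^d\subseteq T^d$''; the inequality goes the other way: a smaller container needs a \emph{larger} dilatation factor, so $C^d\subseteq T^d$ gives $R(F,C^d)\ge R(F,T^d)$. Second, for $k\ge\frac{d+1}{2}$ you propose the translate $c=0$, $\rho=k$; but $x_i\in k(-T^d)$ requires $a_j^Tx_i\ge -k$ for all $j$, and $a_i^Tx_i=-d$, so $c=0$ works only when $k\ge d$. For $\frac{d+1}{2}\le k<d$ one needs a nonzero center (the paper uses $c=\sum_{i=1}^{k+1}x_i$, for which $a_j^T(x_i-c)\in\{-k,\,1-k+d\}$ and $1-k+d\le k$ exactly when $k\ge\frac{d+1}{2}$); the barycentric identity $-\frac1k x_j\in\conv\{x_i:i\ne j\}$ you invoke holds for the full centered simplex, not for a $k$-face of $T^d$ whose vertices do not sum to zero. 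Second, and decisively, the whole case $k\le\frac{d+1}{2}$ is deferred to ``the real work'': you correctly observe that monotonicity is too weak (indeed $R_d=d$, and your side remarks about $s(C^d)$ are off — $C^d$ is $0$-symmetric, so $s(C^d)=1$), but you produce neither the covering translate achieving $\frac{d+1}{2}$ nor the optimality certificate. The paper's proof consists precisely of exhibiting the center $c=\frac{1}{k+1}\bigl(\sum_{l=1}^{k+1}x_l+\gamma\sum_{l=k+2}^{d+1}x_l\bigr)$ with $\gamma=-\frac{(k-1)(d+1)}{2(d-k)}$, verifying $a_j^T(x_i-c)\in\{\gamma,\pm\frac{d+1}{2}\}$ with $\gamma\ge-\frac{d+1}{2}$, and reading off the touching facets with normals $\pm a_1,\dots,\pm a_{k+1}$ (whose convex hull contains $0$) to apply Theorem~\ref{theo:optCond}. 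Until that computation is carried out, the lemma is not proved.
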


 \begin{proof}
   Let $T^d= \conv\{x_1,\dots, x_{d+1}\}$ for suitable $x_1, \dots, x_{d+1} \in \R^d$. Independently of which coordinates we choose for $x_1, \dots, x_{d+1}$, we can index the normals $a_1, \dots, a_{d+1} \in \R^d$ of a halfspace presentation of $T^d$ such that $T^d = \bigcap_{i=1}^{d+1} H^{\leq}_{(a_i,1)}$ and 
   $$a_j^T x_i = \left\{
     \begin{array}{cl}
       1	& \text{ if } j\neq i \\
       -d	& \text{ if } j= i
     \end{array}
   \right. \text{ for } i,j\in\{1,\dots, d+1\}
   $$ 

   Let $k\in \{1, \dots, d+1\}$ and consider an arbitrary $k$-face $F$ of $T^d$, without loss of generality, $F= \conv\{x_1, \dots, x_{k+1}\}$.

   For $k \le \frac{d+1}{2}$, let 
   $$\gamma = - \frac{(k-1)(d+1)}{2(d-k)} \text{~~ and ~~} c = \frac{1}{k+1}\left(\sum_{l=1}^{k+1} x_l 
   + \gamma \sum_{l= k+2}^{d+1} x_l\right).$$ 
   Then $\gamma \ge -\frac{d+1}{2}$ and for $i\in \{1,\dots, k+1\}$ and $j\in \{1,\dots, d+1\}$
     {\renewcommand{\arraystretch}{1.5}
       $$a_j^T (x_i-c) = \left\{ 
         \begin{array}{cl}
           \gamma & \text{ if } j > k+1 \\
           \frac{d+1}{2} & \text{ if } j\leq k+1, j \neq i \\
           -\frac{d+1}{2} & \text{ if } j=i \ . \\
         \end{array} \right. $$ }

Hence $F-c \subseteq \frac{d+1}{2}C^d$.
   Moreover, these equalities show that $T^d -c$ touches the facets of $\frac{d+1}{2} C^d$ induced by the hyperplanes 
   $H^=_{(a_i, (d+1)/2)}$, $H^=_{(a_i, -(d+1)/2)}$ for $i=1, \dots, k+1$ and therefore it follows by Theorem \ref{theo:optCond} 
   that $R_k(T^d, C^d)= {(d+1)}/{2}$.
     
   For $k \ge \frac{d+1}{2}$, let $c = \sum_{i=1}^{k+1} x_i$. Then $1-k+d \le k$ and for $i\in \{1,\dots, k+1\}$ and $j\in \{1,\dots, d+1\}$
   $$ a_j^T(x_i-c) = \left \{
     \begin{array}{cl}
       -k 		& \text{ if } j> k+1 \\
       1-k+d		& \text{ if } j\leq k+1, j\neq i \\
       -k		& \text{ if }  j=i 
     \end{array} \right.
   $$
   showing $F-c \subseteq kC^d$.
   Here again, the equalities show that $T^d -c$ touches every facet of $-kT^d$ and $R_k(T^d, C^d)= k$ follows by Theorem \ref{theo:optCond}.
\end{proof}

\begin{theo}[Inequality relating core-radii for 0-sym\-metric containers] \label{theo:ineqSymm}%
  Let  $k,l\in \N$ such that $l\leq k\leq d$, $P\in \CC^d$ and $C\in\CC_0^d$ a 0-symmetric container. Then
  {\renewcommand{\arraystretch}{1.5}
    $$\frac{R_k(P,C)}{R_l(P, C)} \leq \left\{
      \begin{array}{cl}
        \frac{2k}{k+1} & \text{ for } l \le \frac{k+1}{2} \\
        \frac{k}{l}  & \text{ for } l \ge \frac{k+1}{2} \ .
      \end{array} \right.$$ }
  Moreover, let $T^k$ be a $k$-simplex embedded in the first $k$ coordinates of $\R^d$ and $C^k= (T^k \cap (-T^k)) + (\{0\}^k \times [-1,1]^{d-k})$. 
  Then
  {\renewcommand{\arraystretch}{1.5}
    $$\frac{R_k(T^k,C^k)}{R_l(T^k, C^k)} = \left\{
      \begin{array}{cl}
        \frac{2k}{k+1} & \text{ for } l \le \frac{k+1}{2} \\
        \frac{k}{l}  & \text{ for } l \ge \frac{k+1}{2} \ . 
      \end{array} \right.$$ }
\end{theo}

\begin{proof}
Let $S \subseteq P$ be a $k$-simplex such that $R_k(P,C)= R(S,C)$ and assume without loss of generality that $R(S,C)= k$. By Bohnenblust's Inequality, we get that $R_1(S,C) \geq (k+1)/2$ and thus $R_l(P,C)\geq R_1(P,C) \geq (k+1)/2$. Thus $R_k(P,C) / R_l(P,C) \leq 2k/(k+1)$. On the other hand 
$$\frac{R_k(P,C)}{R_l(P,C)} \leq \frac{k}{l}$$
 by Theorem \ref{theo:asymmIneq}. Together this yields $$\frac{R_k(P,C)}{R_l(P,C)} \leq \min \left\{\frac{2k}{k+1}, \frac{k}{l} \right\},$$ which splits into the two cases claimed above.
The second statement follows from Lemma \ref{lem:coreRadii_TdinTdCap-Td} and the observation that the computation of $R(T^k, C^k)$ is in fact the 
$k$-dimensional containment problem of containing $T^k$ in $T^k \cap (-T^k)$.
\end{proof}

\medskip

With Theorem \ref{theo:ineqSymm} at hand, Theorem \ref{theo:main} follows as a simple corollary:

\medskip

\begin{proof}[Theorem \ref{theo:main}]
  For $k=d$ and $l \ge (d+1)/2$ the inequalities in Theorem \ref{theo:asymmIneq} and \ref{theo:ineqSymm} coincide. Hence the proof of 
  Corollary \ref{cor:coreSetsAssym} can simply be copied up to the additional condition that $\varepsilon < 1$ and the change from $C= T^d$ to $C= T^d \cap (-T^d)$ to show that the bound is best possible. 
\end{proof}

\medskip

 On the other hand diametral pairs of points in $P$ are 1-core-sets for every 0-symmetric container $C$ as Bohnenblust's result already shows. 
 Theorem \ref{theo:ineqSymm} then shows that no choice of up to $\lfloor (d-3)/2 \rfloor$ points to add to the core-set improves the approximation quality.

\medskip

{\bf Remark: }  Theorem \ref{theo:main} also implies the 
  non-existence of sublinear center-conform $\varepsilon$-core-sets for $\varepsilon <1$. 
 On the other hand, we know from Lemma \ref{lem:0CoreSets} that there are linear ones, even if $\varepsilon=0$.

\bigskip

Theorem \ref{theo:main} shows that the class of symmetric containers is too large for an extension of Bohnenblust's Inequality to other core-radii than $R_1$. A question that remains open is, whether there is a sensible class of containers $\CC \subseteq \CC_0^d$ such that the following inequality holds for $1\leq l \leq k \leq d$: 
\begin{equation} \label{eq:false-gen} 
 \frac{R_k(P, C)}{R_l(P, C)} \leq \frac{k(l+1)}{l(k+1)} 
\end{equation} 

If true, (\ref{eq:false-gen}) would yield dimension independent $\varepsilon$-core-sets for all $\varepsilon > 0$ in the same way as shown for the MEB problem in the proof of Theorem \ref{theo:coreSetsMEB}. 

\bigskip

{\bf Acknowledgements.}
We would like to thank David Larman for valuable discussions and two anonymous referees for their comments, which helped improving this manuscript.

\bibliographystyle{plain}
\bibliography{references}

\end{document}